\newtheorem{theorem}{Theorem}
\newtheorem{corollary}[theorem]{Corollary}
\newtheorem{proposition}[theorem]{Proposition}
\newtheorem{lemma}[theorem]{Lemma}
\newtheorem{question}[theorem]{Question}
\theoremstyle{definition}
\newtheorem{definition}[theorem]{Definition}
\theoremstyle{remark}
\newtheorem{remark}[theorem]{Remark}
\newcommand{\rank}{\mathrm{rank}}
\newcommand{\vvirg}{ , \dots , }
\newcommand{\textsum}{{\textstyle \sum}}
\newcommand{\textprod}{{\textstyle \prod}}
\newcommand{\textbinom}[2]{{\textstyle \binom{#1}{#2}}}
\newcommand{\textfrac}[2]{{\textstyle \frac{#1}{#2}}}
\newcommand{\bfR}{\mathbf{R}}
\newcommand{\bfa}{\mathbf{a}}
\newcommand{\bfb}{\mathbf{b}}
\newcommand{\bfg}{\mathbf{g}}
\newcommand{\bfh}{\mathbf{h}}
\newcommand{\bfp}{\mathbf{p}}
\newcommand{\bfx}{\mathbf{x}}
\newcommand{\bfy}{\mathbf{y}}
\newcommand{\calJ}{\mathcal{J}}
\newcommand{\bbC}{\mathbb{C}}
\newcommand{\bbN}{\mathbb{N}}
\newcommand{\bbP}{\mathbb{P}}
\newcommand{\bbR}{\mathbb{R}}
\newcommand{\bbZ}{\mathbb{Z}}
\newcommand{\frakS}{\mathfrak{S}}
\renewcommand{\phi}{\varphi}
\renewcommand{\theta}{\vartheta}
\renewcommand{\tilde}[1]{\widetilde{#1}}
\newcommand{\id}{\mathrm{id}}
\newcommand{\Id}{\mathrm{Id}}
\DeclareMathOperator{\End}{End}
\DeclareMathOperator{\Sym}{Sym}
\newcommand{\bfVNP}{\mathbf{VNP}}
\newcommand{\bfVP}{\mathbf{VP}}
\newcommand{\perm}{\mathrm{perm}}
\newcommand{\Pow}{\mathrm{Pow}}
\newcommand{\trace}{\mathrm{trace}}
\renewcommand{\det}{\mathrm{det}}
\newcommand{\uline}[1]{\underline{#1}}
\newcommand{\rk}{\mathrm{rank}}
\newcommand{\BC}{\mathbb C}
\newcommand{\La}[1]{\Lambda^{#1}}
\newcommand{\tim}{\mathrm{image}}
\newcommand{\IMM}{\mathrm{IMM}}
\newcommand{\ur}{\underline {\bfR}} 
\newcommand{\xto}[1]{\xrightarrow{#1}}
\newcommand{\Mat}{\mathrm{Mat}}
\title[Explicit polynomials with maximal spaces of partials]{Explicit polynomial  sequences  with maximal spaces of partial derivatives and a question of K. Mulmuley}
\author[F. Gesmundo]{Fulvio Gesmundo}
\address[F. Gesmundo]{QMATH, Dept. of Mathematical Sciences, Univ. of Copenhagen, Denmark}
\email{fulges@math.ku.dk}
\thanks{Gesmundo is supported by the European Research Council (ERC Grant Agreement no. 337603) and VILLUM FONDEN via the QMATH Centre of Excellence (Grant no. 10059)}
\author[J. M. Landsberg]{Joseph M. Landsberg}
\address[J. M. Landsberg]{Dept. of Mathematics, Texas A\&M University, College Station, TX}
\email{jml@math.tamu.edu}
\thanks{Landsberg is supported by NSF grants DMS-1405348 and CCF-1814254.}
\subjclass[2010]{68Q15; 15A69}
\keywords{computational complexity, Waring border rank, shifted partial derivatives, Koszul flattening}
\begin{document}

\begin{abstract}
We answer a question of K. Mulmuley: In \cite{2016arXiv160902103E} it was shown that the method of shifted partial derivatives cannot be used to separate the padded permanent from the determinant. Mulmuley asked if this \lq\lq no-go\rq\rq\ result could be extended to a model without padding. We prove this is indeed the case using the iterated matrix multiplication polynomial. We also provide several examples of polynomials with maximal space of partial derivatives, including the complete symmetric polynomials. We apply Koszul flattenings to these polynomials to have the first explicit sequence of polynomials with symmetric border rank lower bounds higher than the bounds attainable via partial derivatives.
\end{abstract}

\maketitle

\section{Introduction}
 
 Let $S^d\BC^N$ denote the space of homogeneous polynomials of degree $d$ in $N$ variables and let $p\in S^d\BC^N$. Let $S^e\BC^{N*}$ denote the space of homogeneous differential operators of order $e$ with constant coefficients, which acts on $S^d\BC^N$ when $e\leq d$. The {\it $e$-th partial derivative map}  of $p$  (or \emph{$e$-th flattening of $p$}) is
\begin{align} \label{flateqn}
p_{e,d-e}: S^e\BC^{N*}& \to S^{d-e}\BC^N\\
\nonumber D & \mapsto D(p).
\end{align}
 We call the image of $p_{e,d-e}$ the $e$-th \emph{space of partial derivatives} of $p$; it is straightforward to verify that $\rank (p_{e,d-e}) = \rank (p_{d-e,e})$ and that given $e' \leq e \leq d/2$, if $p_{e,d-e}$ has full rank then $p_{e',d-e'}$ has full rank.

Let $M\geq N$. Choose a linear inclusion $\bbC^N \subseteq \bbC^M$, so that polynomials in $N$ variables can be regarded as polynomials in $M$ variables which happen to use only $N$ of them. A polynomial $p\in S^d\BC^N$ is a {\it degeneration} of $q\in S^d\BC^M$, if  $p\in \overline{GL_M\cdot q}\subseteq S^d\BC^M$, where the overline denotes closure, equivalently in the usual (Euclidean) topology or in the Zariski topology.  Similarly, $p$ is a \emph{specialization} of $q$ if $p \in \End_{M} \cdot q \subseteq S^d \bbC^M$. Notice that if $p$ is a specialization of $q$ then it is a degeneration of $q$. In complexity theory, one is interested in finding obstructions to specialization of a polynomial $p$ to a polynomial $q$.  

A common strategy to determine obstructions to degeneration (hence, to specialization) of a polynomial $q$ to a polynomial $p$ is to find closed conditions that $q$ and every element of $GL_M \cdot q$ satisfy, but $p$ does not satisfy. Typical examples of this method are flattening techniques (see, e.g., \cite{MR2901512}), which exploit the semi-continuity of matrix rank, namely that $\{ X \in \Mat_{\ell \times \ell} : \rank( X) \leq r\}$ is a closed subset of $\Mat_{\ell \times \ell}$ (both in the usual and in the Zariski topology).

 The {\it method of partial derivatives} is one such example: since the entries of $q_{e,d-e}$ (as a matrix expressing the map $S^e \bbC^{N*}  \to S^{d-e} \bbC^N$) are continuous in the coefficients of $q$, whenever $p$ is a degeneration of $q$, semi-continuity of matrix rank guarantees that $\rank(p_{e,d-e})\leq \rank(q_{e,d-e})$ for all $e$. Therefore, comparing the ranks of the partial derivatives maps of $p$ and $q$ for any $e$, one can prove that $p$ is not a degeneration of $q$ (and thus nor is $p$ a specialization of $q$). 

The method of partial derivatives  dates back to Sylvester in 1852 \cite{sylvestercat}, who called the maps \eqref{flateqn} {\it catalecticants}. These maps have been used to obtain lower bounds on the Waring rank, Waring border rank and cactus border rank  of polynomials (see, e.g., \cite{MR3121848,MR1735271}). The {\it symmetric} or {\it Waring rank} of a polynomial $p\in S^d\BC^N$ is the smallest $r$ such that $p=\sum_{j=1}^r \ell_j^d$ where $\ell_j\in \BC^N$ are linear forms. One writes $\bold R_S(p)=r$. The {\it symmetric } or {\it Waring border rank} of $p$ is the smallest $r$ such that $p$ is a limit of polynomials of Waring rank $r$, and one writes  $\ur_S(p)=r$.  The ranks of the partial derivatives maps give lower bounds for the symmetric border rank of $p$: $\ur_S(p) \geq \max_{e} \{ \rank(p_{e,d-e})\}$. From a complexity theory perspective, Waring rank captures the complexity of a polynomial in the model of depth three powering circuits, or $\Sigma\Lambda\Sigma$ circuits, namely depth three arithmetic circuit whose first and third 
layers consist of addition gates and whose middle layer consists of powering gates, sending $z\mapsto z^d$ for some $d$.

 In \cite{MR0480472,MR515043}, it was shown that for a general polynomial $p$  all the maps $p_{e,d-e}$ are of maximal rank. When the second author was preparing \cite{Lsimons}, he asked several experts if they knew of an explicit  sequence of polynomials (e.g., in the complexity class $\bfVNP$) with partial derivatives of maximal rank, as the standard references \cite{MR1735271} in mathematics and  \cite{MR2901512} in computer science did not have one. Those asked did not furnish any example, so we wrote down several, see below. One example we found surprised us: the polynomial $(x_1^2+\cdots +x_n^2)^k$, because it is in  the complexity class  $\bfVP_e$ of sequences of polynomials admitting polynomial size formulas. It turns out this example had been discovered by Reznick in 1991 \cite[Thm. 8.15]{MR1096187}, and in the same memoir he describes an explicit sequence that essentially dates back to  Bierman \cite{MR1547094} (the proof, if not the statement appeared in 1903), see below.

 Let $p_{n,d}=x_1^d +\cdots + x_n^d$ denote the power sum polynomial  of degree $d$ in $n$ variables and $h_{n,d}=\sum_{\vert \alpha \vert=d}x_1^{\alpha_1}\cdots x_{n}^{\alpha_n}$ the complete symmetric polynomial  of degree $d$ in $n$ variables. 

For the following polynomial sequences, all partial derivatives map have full rank:
\begin{itemize}
 \item $P_{Bier, n,d}:=\sum_{\vert \alpha\vert = d}(\alpha_{1}x_1+\cdots + \alpha_{n} x_{n})^d$ where $\alpha$ ranges over all  multi-indices  $(\alpha_{1}\vvirg  \alpha_{n})$  of non-negative integers  such that $\alpha_{1}+\cdots + \alpha_{n}=d$ i.e., exponents of monomials of degree $d$ in $n$ variables (Bierman \cite{MR1547094},Reznick \cite{MR1096187});
 
 \item $f_{n,k}:=(p_{n,2})^k\in S^{2k}\BC^n$, (Reznick \cite{MR1096187}, a proof is given in \S\ref{easycataproofs});
 
 \item $\tilde f_{n,k}:=p_{n,1}f_{n,k}\in S^{2k+1}\BC^n$ (Theorem \ref{thm: flattenings of ellqk hit everything});
 
 \item $h_{n,d}\in S^d\BC^n$ (Theorem \ref{thm: hnd have maximal catalecticant rank}).
 \end{itemize}
 
 When $n,d$ are polynomially related, $h_{n,d}\in \bfVP_s$, the complexity class determined by the determinant,  because the complete symmetric functions can be expressed as a determinant of a matrix whose entries are power sum functions. On the other hand, the polynomials $f_{n,k}$ and $\tilde{f}_{n,k}$ belong to the complexity class $\bfVP_e$ of polynomials admitting a polynomial size formula. The fact that there are elements in $\bfVP_e$ having partial derivatives map of full rank suggests that the $\Sigma\Lambda\Sigma$ model is quite weak.

The {\it method of shifted partial derivatives} is a variant of the method of partial derivatives. It was introduced in \cite{DBLP:journals/eccc/Kayal12} and exploited in \cite{gupta4} to prove super-polynomial complexity lower bounds for depth four circuits for the permanent (and determinant). In the same paper the authors ask if the method could be used to approach Valiant's conjecture, that is to separate the class $\bfVP$ from the class $\bfVNP$.

For $p\in S^d\BC^N$  the method of shifted partials is based on the study of the following maps (for judiciously chosen $e$ and $\tau$):
\begin{align*}
p_{(e,d-e)[\tau]}:S^{e}\BC^{N^*}\otimes S^\tau \BC^N& \to S^{d-e+\tau}\BC^N\\
D\otimes q & \mapsto qD(p).
\end{align*}
 Notice that if $\tau = 0$, then $p_{(e,d-e)[\tau]}$ is the partial derivative map defined in \eqref{flateqn}. Let  $\langle \partial^{=e}p\rangle_{=\tau}:=p_{(e,d-e)[\tau]}(S^{e}\BC^{N^*}\otimes S^\tau \BC^N)$.

Again, semi-continuity of matrix rank guarantees that if $p$ is a degeneration of $q$, then $\dim \langle \partial^{=e} q \rangle_{=\tau} \geq  \dim \langle \partial^{=e}p \rangle_{=\tau}$ for all $\tau$ and the method of shifted partials can be used to prove that $p$ is not a degeneration of $q$ by showing that $\dim \langle \partial^{=e}p \rangle_{=\tau} > \dim \langle \partial^{=e} q \rangle_{=\tau}$ for some $\tau$.

 There is a geometric interpretation of the image of the shifted partial derivative map: given $p \in S^d \bbC^n$, let $V(p) \subseteq \bbP (\bbC^n)^*$ be the hypersurface of degree $d$ cut out by $p$. The image of $p_{e,d-e}$ generates an ideal in $\Sym(\bbC^n)$ that we denote by $\calJ_e(p)$; it cuts out a subvariety of $V(p)$ that is called the $e$-th \emph{Jacobian locus of} $p$. The image of the shifted partials map $p_{(e,d-e)[\tau]}$ is the component of degree $d+\tau$ of $\calJ_e(p)$, in other words, the value of the Hilbert function of $\calJ_e(p)$ in degree $d+\tau$.  In particular, the study of the ranks of the shifted partials maps of $p$ is equivalent to the study of the growth of the ideals $\calJ_e(p)$ and more precisely of the growth of their Hilbert functions.  

 The results of \cite{2016arXiv160902103E} show that the method of shifted partial cannot be used to separate $\bfVP_s$ from $\bfVNP$ in the classical formulation of Valiant's conjecture, where one seeks for a super-polynomial lower bound on the so-called determinantal complexity of the \emph{padded} permanent polynomial, namely on the smallest possible $n(m)$ such that $z^{n(m)- m} \perm_m$ is a specialization of $\det_{n(m)}$. Informally, the proof of \cite{2016arXiv160902103E} exploits the padding $z^{n-m}$ to prove that the shifted partial spaces of $z^{n-m} \perm_m$ do not grow fast enough to give a super-polynomial separation from the shifted partial spaces of $\det_n$.
 
This motivates the question on whether the method of shifted partials can be used to achieve a super-polynomial lower bound in a model which does not require padding.

\begin{definition}
Given $n,d$, let $X_\alpha = ((\xi_\alpha)^i_j)_{i,j = 1 \vvirg n}$ be $n \times n$ matrices of indeterminates for $\alpha = 1 \vvirg d$. The \emph{$(n,d)$-iterated matrix multiplication} polynomial $\IMM^d_n\in S^d(\BC^{dn^2})$ is 
\[
\IMM^d_n: (X_1\vvirg X_d)\mapsto \trace(X_1\cdots X_d).
\] 

Let $(\xi^i_j)_{i,j = 1 \vvirg n}$ be an $n\times n$ matrix of indeterminates. The $(n,d)$-\emph{matrix powering} polynomial $\Pow^d_n\in S^d\BC^{n^2}$ is
\[
 \Pow^d_n : X \mapsto \trace(X^d)
\]
\end{definition}

By \cite{Nisan:1991:LBN:103418.103462}, the polynomials $\IMM^d_n$ and $\Pow^d_n$ can be used to define $\bfVP_s$-complete sequences without the use of padding. More precisely, a sequence of homogeneous polynomials $\{ f_m \}_{m \in \bbN}$ with $f_m \in S^{d_m} \bbC^{M_m}$ (with $d_m, M_m$ growing polynomially in $m$) is in $\bfVP_s$ if and only if there exists a function $n(m)$ (resp. $n'(m)$) growing polynomially in $m$, such that $f_m$ is a specialization of $\IMM^{d_m}_{n(m)}$ (resp. $\Pow^{d_m}_{n'(m)}$). Remarkably, in this case, the model does not require padding and allows one to compare $\IMM^d_n$ and $\Pow^d_n$ directly with the sequence of polynomials.

In particular, Valiant's $\bfVP_s \neq \bfVNP$ conjecture can be rephrased by stating that there is no polynomially bounded function $n(m)$ such that the permanent polynomial $\perm_m$ is a specialization of $\IMM^m_{n(m)}$ (or of $\Pow^m_{n(m)}$). Note that $\Pow^m_n$ is a specialization of $\IMM^m_n$. 

We prove that the method of shifted partials cannot be used to achieve a super-polynomial separation between $\perm_m$ and $\IMM^m_n$: 

\begin{theorem}\label{noshiftthm}
If $n> m^5$, then $\perm_m$ cannot be separated from $\IMM^m_{n}$ by the method of shifted partial derivatives. More precisely, given any linear inclusion $\BC^{m^2}\subseteq \BC^{ {m} n^2}$ considering $\perm_m\in S^m\BC^{ {m} n^2}$ as a polynomial that just involves $m^2$ of the $m n^2$ variables, then for all choices of $e$,$\tau$, $\dim  \langle \partial^{=e}(\perm_m\in S^m\BC^{ {m}n^2})\rangle_{=\tau}\leq \dim  \langle \partial^{=e}\IMM^m_n\rangle_{=\tau}$.
\end{theorem}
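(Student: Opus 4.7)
The plan is to bound both dimensions in terms of a common ambient subspace of $S^{m-e+\tau}\bbC^{mn^2}$ and compare the bounds. For the upper bound on the permanent side, note that $\perm_m$ depends only on the $m^2$ permanent variables $V_m\subseteq\bbC^{mn^2}$, so every $e$-th partial of $\perm_m$ is an $(m-e)$-sub-permanent lying in $S^{m-e}V_m$, and therefore
\[
\langle\partial^{=e}\perm_m\rangle_{=\tau}\;\subseteq\; S^{m-e}V_m\cdot S^\tau\bbC^{mn^2}.
\]
The dimension of the right-hand space is an explicit polynomial in $n$ of degree $2\tau$, equal to the Hilbert function at degree $m-e+\tau$ of the extension of the ideal $\frakm_{V_m}^{m-e}$ to $\bbC[\bbC^{mn^2}]$.

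For the matching lower bound, I would compute the $e$-th partials of $\IMM^m_n$ explicitly. Differentiating with respect to variables from $e$ distinct matrices $X_{\alpha_1},\dots,X_{\alpha_e}$ yields a product of entries of matrix products corresponding to the gaps in $[m]\setminus\{\alpha_s\}$; in particular, varying the subset of matrices and the entries produces every entry of each $(m-e)$-fold consecutive matrix product $X_aX_{a+1}\cdots X_b$, and for non-consecutive choices of $\alpha_s$, many products of individual matrix entries. The central claim is that the shifts by $S^\tau\bbC^{mn^2}$ of these explicit partials span a subspace of $S^{m-e+\tau}\bbC^{mn^2}$ of dimension at least $\dim(S^{m-e}V_m\cdot S^\tau\bbC^{mn^2})$. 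Concretely, one should produce, for each monomial basis element of $S^{m-e}V_m\cdot S^\tau\bbC^{mn^2}$, an element of $\langle\partial^{=e}\IMM^m_n\rangle_{=\tau}$ projecting to it under a suitable linear map, thereby establishing a surjection and hence the desired dimension inequality.

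The hypothesis $n>m^5$ enters in this lower-bound step: it guarantees that the index range $[n]\times[n]$ in each matrix is large enough to carry out the construction above without collisions among the partial-shift combinations. Geometrically, the Jacobian locus $V(\calJ_e(\IMM^m_n))$ has high codimension in $\bbC^{mn^2}$ (being cut out by many matrix-product entries), whereas $V(\calJ_e(\perm_m))$ contains the entire complementary $(mn^2-m^2)$-plane and so has small codimension; hence the Hilbert function of $\calJ_e(\IMM^m_n)$ grows faster in degree $m-e+\tau$, and the estimate $n>m^5$ is the quantitative form of this principle sufficient for all $e$ and $\tau$. The main obstacle is the combinatorial bookkeeping required to control the linear relations among the many partial derivatives of $\IMM^m_n$, which are structured as entries of matrix products arranged cyclically along $[m]$, and this is where the bulk of the work lies.
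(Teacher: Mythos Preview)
Your proposal is an outline, not a proof: the ``central claim'' that the $\tau$-shifted $e$-th partials of $\IMM^m_n$ span a space of dimension at least $\dim(S^{m-e}V_m\cdot S^\tau\bbC^{mn^2})$ is asserted but never established, and you yourself concede that ``this is where the bulk of the work lies.'' Directly controlling the linear relations among partials of $\IMM^m_n$ (traces of products of matrices of independent variables, arranged cyclically) is hard; there is no indication of how you would actually produce, for each monomial in $S^{m-e}V_m\cdot S^\tau\bbC^{mn^2}$, an element of $\langle\partial^{=e}\IMM^m_n\rangle_{=\tau}$ hitting it, nor why $n>m^5$ is exactly the threshold that makes your unspecified construction collision-free. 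As written, the argument has a genuine gap at precisely the point where all the content should be.

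The paper's proof takes a completely different route and sidesteps the difficulty you identify. It never analyses the partials of $\IMM^m_n$ directly. Instead it uses that shifted-partial dimensions can only drop under degeneration, and degenerates $\IMM^m_n$ to much simpler polynomials whose flattening behaviour is already understood: for $e\geq\lceil m/2\rceil$, to $f_{n,k}=(x_1^2+\cdots+x_n^2)^k$ or $\tilde f_{n,k}=\ell_n f_{n,k}$, whose $e$-th flattening is \emph{surjective} so the shifted partials fill all of $S^{m-e+\tau}\bbC^n$; for $e<\lceil m/2\rceil$ and small $\tau$, to the same $F_{m,n}$, whose $e$-th flattening is injective, and then invokes Macaulay's theorem on ideal growth to bound $\dim\langle\partial^{=e}F_{m,n}\rangle_{=\tau}$ from below and compares against the crude upper bound $\binom{m}{e}^2\binom{n+\tau-1}{\tau}$ for the permanent (this is where $n>m^5$ actually enters, as a straightforward numerical inequality); and for $e<m/2$ and large $\tau$, to the power sum $y_1^m+\cdots+y_{m^2}^m$, whose shifted partials again fill the whole ambient space once $\tau>m^3$. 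The three cases overlap to cover all $(e,\tau)$. The moral is that the lower bound on the $\IMM$ side comes for free once you know explicit polynomials in $\bfVP_e$ with maximal partial-derivative spaces; no bookkeeping on $\IMM^m_n$ itself is required.
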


 \subsection*{Additional results} We give {\it a priori} upper bounds for the utility of Koszul flattenings, another variant of the partial derivatives map, in comparing the complexity of polynomials (Proposition \ref{kozapriori}). We show that these bounds are sharp for the first Koszul flattenings in low dimensions and degree (Remark \ref{koscomputer}). We  obtain explicit (but not sharp) lower bounds for the Koszul flattenings of $\tilde f_{n,k}$, showing that one obtains better Waring border rank lower bounds with this method than by the method of partial derivatives (Proposition \ref{fknkosz}). Ironically, now the simple polynomial $\tilde f_{n,k}$ has the highest Waring  border rank lower bound of all explicit polynomials of odd degree.

 \subsection*{Related work} Let $e_{n,d}=\sum_{1\leq i_1<i_2< \cdots <i_d\leq n} x_{i_1}\cdots x_{i_d}\in S^d\BC^n$ denote the elementary symmetric polynomial of degree $d$ in $n$ variables. The complexity of $e_{n,d}$ has been well studied: its symmetric border rank is bounded below by $\binom{n}{\lfloor \frac {d}{2}\rfloor}$ because this is the rank of $(e_{n,d})_{ \lfloor d/2 \rfloor, \lceil d/2 \rceil}$ (see e.g. \cite{NisWig:LowerBoundsArithCircPartDer}); when $d$ is odd, its symmetric rank is $\sum_{i=0}^{\lfloor \frac d2\rfloor}\binom{n}{i}$  when $d$ is odd and there is a similar formula for even $d$ \cite{MR3440150}. Its padded version can be computed  by a homogeneous depth three ($\Sigma\Pi\Sigma$)  circuit of size $n^2$ (due to Ben-Or), and when $\log(n)\leq d\leq \frac {2n}3$,  one has the lower bound of $\max\{ \Omega (\frac{n^2}{d}),\Omega(n d)\}$ from \cite{MR1867306} for its depth three circuit size. The lower bounds  appear to translate to complete symmetric functions; however the upper bound relies on the generating function for the elementary symmetric functions being a product of linear forms, whereas the complete symmetric functions have generating function $\prod_{i=1}^n (1-x_it)^{-1}$. The gap between the padded and unpadded depth three circuit complexity may have led researchers to
 think the results of \cite{2016arXiv160902103E} might fail in a model without padding. For this reason, K. Mulmuley suggested that we investigate the barriers of the method of shifted partials in the unpadded setting (although Mulmuley himself anticipated our answer). Similar concerns were shown after the results of \cite{IkPa:Rectangular_Kron_in_GCT} and \cite{BuIkPa:no_occurrence_obstructions_in_GCT} on the Geometric Complexity Theory program, and were partially addressed in \cite{GesIkPa:GCT_and_matrix_pow}, exploiting the same homogenization result of \cite{Nisan:1991:LBN:103418.103462} that we use in this work.
 
 The shifted partial derivative complexity of elementary symmetric polynomials is studied in \cite{MR3419496}, where strong lower bounds are proved, which in turn give complexity lower bounds for depth four circuits.
 
\subsection*{Acknowledgements} We thank C. Yan for discussions on the Gessel-Viennot method, B. Reznick and Z. Teitler for historical information, and N. Stein and Mathoverflow for the proof of Lemma \ref{lemma: principal minor on indep rows}. We thank the Santa Fe Institute and the organizers of the working group in \emph{Geometric Complexity Theory} in December 2016 that inspired this work. We also thank the anonymous referees for numerous suggestions to improve the exposition and on the estimate in Case 2 of the proof of Theorem \ref{noshiftthm}.

\section{Proofs that $f_{n,k}$, $\tilde f_{n,k}$ have maximal partial derivatives}\label{easycataproofs}
Introduce the notation $q_n :=p_{n,2} =x_1^2+\cdots + x_n^2$, $\ell_n:= p_{n,1} = x_1 + \cdots + x_n$.

In this section, we prove that $f_{n,k}=(p_{n,2})^k$ and $\tilde{f}_{n,k}=p_{n,1}f_{n,k}$ have partial derivatives maps of maximal rank, or equivalently that their spaces of partial derivatives have the maximal possible dimension. Although the result for $f_{n,k}$ already appeared in \cite[Thm. 8.15]{MR1096187}, we include our proof in this section because it seems less involved and more accessible than the original one; moreover the proof of the result for $\tilde{f}_{n,k}$ relies on the method that we use to prove the result for $f_{n,k}$ (see Remark \ref{rmk: Lemma holds for every quadric}).

 \begin{proposition}\label{prop: hit everything with deg one differentials}
The space of first  derivatives of homogeneous  degree $d+2$ polynomials of the form $h q_n$, where $h$ runs over the space of  homogeneous  degree $d$ polynomials,  equals the space of all homogeneous polynomials of degree $d+1$. In symbols, setting $L = \{ \partial ( h q_n  ) : \partial \in (\bbC^n)^*, h \in S^{d} \bbC^n\}$, we have $L = S^{d+1} \bbC^n$. 
\end{proposition}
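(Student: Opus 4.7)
The plan is to prove that the linear span $L$ contains every monomial $x^\alpha$ of degree $d+1$, since these span $S^{d+1}\bbC^n$. The workhorse is the Leibniz rule applied to a monomial times $q_n$: for any $x^\beta$ of degree $d$,
$$\partial_i(x^\beta q_n) = \beta_i\, x^{\beta - e_i}\, q_n + 2\, x_i x^\beta,$$
so that every expression of this form lies in $L$ (with the convention that the first summand is zero when $\beta_i = 0$).

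The key intermediate step I would prove first is that $x^\gamma q_n \in L$ for every monomial $x^\gamma$ of degree $d-1$. Applying the identity above with $\beta = \gamma + e_j$ for each $j$ and summing over $j$, the identity $\sum_j x^{\gamma + 2 e_j} = x^\gamma q_n$ collapses the right-hand side and yields
$$\sum_j \partial_j(x^{\gamma + e_j} q_n) = (\vert\gamma\vert + n + 2)\, x^\gamma q_n \in L.$$
Since $\vert\gamma\vert + n + 2 > 0$, dividing gives $x^\gamma q_n \in L$.

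Then, given any monomial $x^\alpha$ of degree $d+1$, I would pick any index $i$ with $\alpha_i \geq 1$ and apply the Leibniz identity with $\beta = \alpha - e_i$, obtaining
$$2\, x^\alpha = \partial_i(x^{\alpha - e_i} q_n) - (\alpha_i - 1)\, x^{\alpha - 2 e_i} q_n.$$
The first term on the right lies in $L$ by definition, and the second is in $L$ by the intermediate step when $\alpha_i \geq 2$, or vanishes when $\alpha_i = 1$. Hence $x^\alpha \in L$, completing the argument.

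I do not anticipate any serious obstacle: the two identities needed are one-line Leibniz computations, and the coefficient appearing in the intermediate step is always positive. The one subtle point is that the argument exploits that $q_n$ is a sum of squares in order for $\sum_j x_j^2 = q_n$ to close the telescoping sum; this presumably is why the extension to arbitrary non-degenerate quadratic forms mentioned in Remark \ref{rmk: Lemma holds for every quadric} goes through after a change of variables.
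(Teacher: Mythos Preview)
Your proof is correct. The Leibniz identity, the summation yielding $(|\gamma|+n+2)\,x^\gamma q_n$, and the final back-substitution are all sound, and the case $d=0$ is handled uniformly since then $\alpha_i=1$ forces the second term to vanish.

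The paper takes a slightly different route to the same conclusion. Fixing $\bfx^\alpha$ of degree $d-1$, it sets $g_i=\partial_i(x_i\bfx^\alpha q_n)=2x_i^2\bfx^\alpha+(\alpha_i+1)q_n\bfx^\alpha$ and observes that the vector $(g_1,\dots,g_n)^T$ equals $(2\Id+A)(x_1^2,\dots,x_n^2)^T\cdot\bfx^\alpha$ for a rank-one matrix $A$; invertibility of $2\Id+A$ (via the Sylvester determinant identity) then gives all non-square-free monomials $x_i^2\bfx^\alpha\in L$, and square-free monomials are handled separately by a direct differentiation. Your approach replaces this matrix inversion by the Euler-type trick of summing $\partial_j(x^{\gamma+e_j}q_n)$ over $j$, which directly produces $x^\gamma q_n\in L$ and then feeds back into the Leibniz identity to give every monomial at once. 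The net effect is the same, but your version is a bit more elementary (no determinant identity) and treats the square-free and non-square-free cases uniformly; the paper's version, on the other hand, makes the linear-algebraic structure of the problem more visible.
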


\begin{proof}
If $d = 0$, the statement holds as 
\[
\frac{\partial}{\partial x_i} q_n = 2x_i.
 \]

Let $d \geq 1$. Consider a monomial $\bfx ^\alpha$ of degree $d-1$, where $\alpha$ is a multi-index, $\vert \alpha \vert = d-1$. For   $i\in \{ 1\vvirg n\}$,  define 
\[
g_i := \frac{\partial}{\partial x_i}(q_n x_i \bfx^\alpha  ) = 2x_i^2 \bfx^\alpha    + (\alpha_i+1)q_n \bfx^\alpha  = \left[ 2x_i^2 + (\alpha_i+1)q_n \right] \bfx^\alpha.
\]

For every $i$, $g_i \in L$. Let $\bfg = (g_1 \vvirg g_n)^T$ and $\bfp = (x_1^2 \vvirg x_n^2)^T$ be column vectors. We have
\[
 \bfg =  \left[ (2\Id + A) \bfp \right] \bfx^\alpha
\]
where $A$ is the $(n+1) \times (n+1)$ rank $1$ matrix whose entries in the $i$-th row are $\alpha_i + 1$ for $i=1 \vvirg n$. Let $\mathbf{1} = (1 \vvirg 1 )^T$ and $\bfa = (\alpha_1+1 \vvirg \alpha_n+1)^T$ and notice $A = \mathbf{1} \bfa ^T$. In particular by the Sylvester determinant identity (see, e.g., \cite[Chap. 1, Prob. 3.1]{MR1277174}) $\det(2\Id+A) \neq 0$, so $2 \Id + A$ is invertible.

Therefore, $x_i^2 \bfx^\alpha \in (\BC^{n})^* \cdot \left( q_nS^d \BC^{n} \right)$ for every monomial $\bfx^\alpha \in S^{d-1}\BC^{n}$. This shows that every non-square-free monomial belongs to $(\BC^{n})^* \cdot (q_n S^d\BC^{n})$.

Now let $\bfx^\beta \in S^{d+1}\BC^{n}$ be a square-free monomial; suppose $\beta_1 = 1$ and let $\gamma$ be the multi-index with $\gamma_1 = 0$ and $\gamma_j = \beta_j$ for $j\geq 2$, so that $\bfx^\gamma = \bfx^\beta / x_1$. Then 
\[
2\bfx^\beta = \frac{\partial}{\partial x_1} (q_n \bfx^\gamma).
\]
This concludes the proof.
\end{proof}

\begin{theorem}[Reznick, \cite{MR1096187}]\label{thm: flattenings of qk hit everything} 
The space of $e$-th partial derivatives  of $f_{n.k}$ consists of all multiples of $q_n^{k-e}$ of degree $2k-e$. In symbols, for all  $n,k$ and $e\leq k$,   $(f_{n,k})_{e,2k-e}(S^e (\BC^{n})^*)= q_n^{k-e} S^e \BC^{n}$. In particular, for all $n,k,e$, the flattening $(f_{n,k})_{e,2k-e}$ has full rank.
\end{theorem}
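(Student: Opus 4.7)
The plan is to prove the two inclusions of the theorem separately, both by induction on $e$; the forward inclusion is a direct consequence of the Leibniz rule, while the reverse inclusion reduces, via the inductive hypothesis, to an adaptation of Proposition \ref{prop: hit everything with deg one differentials} in which $q_n$ is replaced by a positive power $q_n^{k'}$.

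For the inclusion $(f_{n,k})_{e,2k-e}(S^e(\BC^n)^*) \subseteq q_n^{k-e} S^e\BC^n$, I would induct on $e$. The base case $e=0$ is immediate. For the inductive step, writing a generic order-$(e-1)$ derivative of $q_n^k$ as $q_n^{k-e+1} h$ with $h \in S^{e-1}\BC^n$ and applying $\partial_i$ gives $q_n^{k-e}\bigl[2(k-e+1) x_i h + q_n\,\partial_i h\bigr]$, which is $q_n^{k-e}$ times a degree-$e$ polynomial, so summing over generators yields the containment.

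For the reverse inclusion I would again induct on $e$. The case $e=0$ is trivial, and $e=1$ follows at once from $\partial_i(q_n^k) = 2kx_i q_n^{k-1}$. For $e \geq 2$, setting $k' := k-e+1 \geq 1$, the inductive hypothesis reduces the problem to showing that first derivatives of $q_n^{k'} h$, as $h$ ranges over $S^{e-1}\BC^n$, span $q_n^{k'-1} S^e\BC^n$. Imitating the proof of Proposition \ref{prop: hit everything with deg one differentials}, for each monomial $\bfx^\alpha$ of degree $e-2$ I compute
\[
g_i := \partial_i(q_n^{k'} x_i \bfx^\alpha) = q_n^{k'-1}\bigl[2k' x_i^2 + (\alpha_i+1) q_n\bigr]\bfx^\alpha
\]
and assemble into the vector equation $\bfg = q_n^{k'-1}\bfx^\alpha (2k'\Id + A)\bfp$, with $A$ the rank-one matrix $A_{ij} = \alpha_i+1$ and $\bfp = (x_1^2,\ldots,x_n^2)^T$. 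Invertibility of $2k'\Id + A$ then places every $q_n^{k'-1} x_i^2 \bfx^\alpha$ in the image, covering all $q_n^{k'-1}$-multiples of non-square-free degree-$e$ monomials; the square-free monomials come directly from $\partial_j(q_n^{k'}\bfx^\gamma) = 2k' q_n^{k'-1}\bfx^\beta$ where $\bfx^\gamma = \bfx^\beta/x_j$ for any $j$ with $\beta_j = 1$.

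The main obstacle is verifying the invertibility of $2k'\Id + A$ in the generalized setting. Since $A$ has rank one, the Sylvester determinant identity yields $\det(2k'\Id + A) = (2k')^n\bigl(1 + \tfrac{|\alpha|+n}{2k'}\bigr)$, which is strictly positive for every $k' \geq 1$ and every $|\alpha| \geq 0$; once this is confirmed, the remainder of the argument is a mechanical transcription of the proof of Proposition \ref{prop: hit everything with deg one differentials}, and the equality of dimensions then immediately gives the claim that $(f_{n,k})_{e,2k-e}$ has full rank.
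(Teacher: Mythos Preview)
Your proof is correct and follows essentially the same approach as the paper's. The paper also inducts on $e$, reduces to showing that first derivatives of $q_n^{k-e+1}S^{e-1}\BC^n$ span $q_n^{k-e}S^e\BC^n$, and computes the same formula $\partial_i(q_n^{k-e+1}\bfx^\alpha) = q_n^{k-e}\bigl[2(k-e+1)x_i\bfx^\alpha + q_n\,\partial_i\bfx^\alpha\bigr]$; the only difference is that the paper then invokes Proposition~\ref{prop: hit everything with deg one differentials} directly by remarking that ``up to rescaling $q_n$ and the differential operators'' the bracketed term is $\partial_i(\bfx^\alpha q_n)$, whereas you explicitly rerun the matrix argument with $2k'\Id + A$ in place of $2\Id + A$ and verify invertibility via the Sylvester identity --- which is exactly what makes the paper's rescaling remark rigorous.
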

 
\begin{proof} 
 We proceed by induction on $e$. For $e = 0$ there is nothing to prove. Let $e\geq 1$. By the induction hypothesis, the $(e -1)$-st flattening surjects onto $ q_n^{k-e+1} S^{e-1} \BC^{n}$. It suffices to show that 
 \[
  (\BC^{n})^* \cdot (q_n^{k-e+1} S^{e-1} \BC^{n}) = q_n^{k-e} S^e \BC^n.
 \]
Notice that, for a monomial $\bfx^\alpha \in S^{e-1} \BC^{n}$, 
\begin{align*}
\frac{\partial}{\partial x_i} (q_n^{k-e+1} \bfx^\alpha ) &= 2(k-e+1) x_i q_n^{k-e} \bfx^\alpha  + q_n^{k-e+1} \frac{\partial \bfx^\alpha }{\partial x_i}  \\ 
&= q_n^{k-e} \left(2(k-e+1) x_i  \bfx^\alpha  + q_n \frac{\partial \bfx^\alpha}{\partial x_i}\right).
\end{align*}
Up to rescaling $q_n$ and the differential operators, the term in parenthesis is $\frac{\partial}{\partial x_i} (\bfx^\alpha q_n)$. These terms span $S^{e}\BC^{n}$ by Proposition \ref{prop: hit everything with deg one differentials}.
\end{proof}

\begin{remark}\label{rmk: Lemma holds for every quadric}
The results of Proposition \ref{prop: hit everything with deg one differentials} and Theorem \ref{thm: flattenings of qk hit everything}
hold for every  non-degenerate quadratic form $g \in S^2 \bbC^n$. Indeed every quadric can be diagonalized by the action of $GL_n$ and the proof only uses the fact that $q_n$ is diagonal. In particular, for every $g \in S^2 \bbC^n$, and every $e \leq k$, we have
\[
( S^e \bbC^n)^* \cdot (g^k ) = g^{k-e} S^{e} \bbC^n.
\]
\end{remark}

In order to prove the analog of Prop. \ref{prop: hit everything with deg one differentials} for $\tilde{f}_{n,k}$, we will exploit the decomposition of $S^d \bbC^n$ into Specht modules under the action of symmetric group $\frakS_n$ which permutes the variables. In particular, if $\bbC^n = [n] \oplus [n-1,1]$ where $[n] = \langle \ell_n \rangle$ is an invariant subspace and $[n-1,1] = \langle x_i - x_j : i,j = 1 \vvirg n\rangle$ is isomorphic to the standard representation of $\frakS_n$.   In particular, a homogeneous polynomial $f \in S^d \bbC^n$ can be written as 
\begin{equation}\label{eqn: bigraded decomp}
 f = \sum_{e = 0}^d \ell ^e g_{d-e}
\end{equation}
where $g _{d-e} \in S^{d-e} [n-1,1]$; this is the decomposition of $f$ as sum of bi-homogeneous terms according to the decomposition $\bbC^n = [n ] \oplus [n-1,1]$.

We redefine our basis of $\bbC^n$ and of $\bbC^{n*}$ in accordance to the splitting $[n] \oplus [n-1,1]$: consider $[n] = \langle \ell_n\rangle$ and $[n-1,1] = \langle x_i - x_1 : i = 2 \vvirg n\rangle$ and in the dual space $\ell_n^* = \frac{1}{n} (\frac{\partial }{\partial x_1 } + \cdots + \frac{\partial }{\partial x_n } )$ and $\Delta_{i} = \frac{1}{2} (\frac{\partial}{\partial x_i} -\frac{\partial}{\partial x_1})$. In particular, $\langle \Delta_i : i = 2 \vvirg n\rangle \simeq [n-1,1]$ and $\langle \ell_n^* \rangle \simeq [n]$. Since $\Delta_i ( \ell_n ) = 0$ and $\ell_n^* \cdot [n-1,1] = 0$, this gives the splitting $\bbC^{n*} = \langle \ell_n \rangle ^* \oplus [n-1,1]^*$, where we identify $\langle \ell_n \rangle ^*$ and $[n-1,1]^*$ as subspace of $\bbC^n$.

\begin{remark}\label{rmk: write q as ell plus g}
It will be useful to write $q_n = x_1 ^2 + \cdots + x_n^2$ and its powers as in eqn. \eqref{eqn: bigraded decomp}. Notice that $S^2 \bbC^n$ contains a two-dimensional space of $\frakS_n$-invariants. To see this, consider the decomposition (see e.g. \cite[eqn. (6.7.1)]{Lan:TensorBook}) $S^2 \bbC^n = S^2 (\langle \ell_n\rangle \oplus [n-1,1] ) = S^2 \langle \ell_n \rangle \oplus \langle \ell_n \rangle \otimes [n-1,1] \oplus S^2 [n-1,1]$:  the subspace $S^2 \langle \ell_n \rangle$ is one-dimensional and $\frakS_n$ acts trivially on it, so this is a space of a invariants generated by $\ell_n^2$; the space $ \langle \ell_n \rangle \otimes [n-1,1]$ is isomorphic to $[n-1,1]$, which is irreducible, so it contains no invariants; the subspace $ S^2 [n-1,1]$ contains a one-dimensional subspace of invariants, generated by $g_n = \frac{1}{2}\sum_{i,j} (x_i - x_j)^2$. The uniqueness of the invariant $g_n$ in $S^2 [n-1,1]$ follows by Schur's Lemma: indeed $S^2 [n-1,1] \subseteq [n-1,1] \otimes [n-1,1] \simeq [n-1,1]^* \otimes [n-1,1] = \End ([n-1,1])$ because 
Specht modules are self dual (see e.g. \cite[Ch. 4]{FulHar:RepTh}); by Schur's Lemma, the only $\frakS_n$-equivariant endomorphism of the irreducible representation $[n-1,1]$ is the identity (up to scale); we deduce that $S^2 [n-1,1]$ contains at most a one-dimensional space of invariants.

Since $q_n$ is an $\frakS_n$-invariant, we deduce that $q_n$ is a linear combination of $g_n$ and $\ell_n^2$: indeed, $q_n = \frac{1}{n}( \ell_n^2 + g_n)$.
\end{remark}

We can now prove the following result. 

\begin{theorem}\label{thm: flattenings of ellqk hit everything}
 For every $n,k,e$, the flattening $(\tilde f_{n,k})_{e,2k+1-e}$ has full rank.
\end{theorem}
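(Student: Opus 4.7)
The plan is to reduce to injectivity of the middle-case flattening $(\tilde f_{n,k})_{k,k+1}:S^k\bbC^{n*}\to S^{k+1}\bbC^n$. The full theorem follows from this case via the symmetry $\rank(p_{e,d-e})=\rank(p_{d-e,e})$ combined with the fact, noted right after equation \eqref{flateqn}, that full rank at $e=\lfloor d/2\rfloor=k$ propagates to every $e'\le k$. So I suppose $D\in S^k\bbC^{n*}$ annihilates $\ell_n q_n^k$ and aim to deduce $D=0$; the edge case $n=1$ is trivial since $\tilde f_{1,k}=x_1^{2k+1}$.

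Writing $D=P(\partial_1,\dots,\partial_n)$ as a polynomial in the formal variables $\partial_i$ and using that $\ell_n$ is linear (so $\partial_\alpha \ell_n=0$ for $|\alpha|\ge 2$), Leibniz gives
\[
D(\ell_n q_n^k)=\ell_n\,D(q_n^k)+\tilde D(q_n^k),\qquad \tilde D:=\sum_{i=1}^n\frac{\partial P}{\partial(\partial_i)}\in S^{k-1}\bbC^{n*}.
\]
By Remark \ref{rmk: Lemma holds for every quadric}, the map $D'\mapsto D'(q_n^k)$ is an isomorphism $S^e\bbC^{n*}\xto{\sim} q_n^{k-e}S^e\bbC^n$ for every $e\le k$. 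Setting $A:=D(q_n^k)\in S^k\bbC^n$ and $B\in S^{k-1}\bbC^n$ by $q_n B:=\tilde D(q_n^k)$, the vanishing $D(\ell_n q_n^k)=0$ becomes $\ell_n A+q_n B=0$. Since $\gcd(\ell_n,q_n)=1$ for $n\ge 2$, divisibility forces $A=q_n A'$ and $B=-\ell_n A'$ for a unique $A'\in S^{k-2}\bbC^n$.

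The key point is that $A$ and $B$ are linked through $D$. To exploit the link, I decompose $D$ using the classical Fischer (harmonic) decomposition relative to $q_n$: every $D\in S^k\bbC^{n*}$ writes uniquely as $D=\sum_{j\ge 0}\Delta^j H_j(\partial)$ with $\Delta=\sum_i\partial_i^2$ and $H_j$ a $q_n$-harmonic polynomial of degree $k-2j$ (after identifying $\bbC^n\cong\bbC^{n*}$ via the standard pairing). The isomorphism of Remark \ref{rmk: Lemma holds for every quadric} sends $\Delta^j H_j(\partial)$ to an explicit nonzero scalar multiple of $q_n^j H_j$, so $A=q_n A'$ is equivalent to $H_0=0$, i.e.\ $D=\Delta R$ for some $R\in S^{k-2}\bbC^{n*}$. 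A direct Leibniz computation then yields $\tilde D=2\ell_n^* R+\Delta R'$ (with $\ell_n^*=\sum_i\partial_i$ and $R':=\sum_i\partial R/\partial(\partial_i)$). Expanding $\tilde D(q_n^k)$ using the identities $\ell_n^* q_n=2\ell_n$ and $\Delta(q_n^m F)=q_n^{m-1}\{2m(n+2m-2+2\deg F)F+q_n\Delta F\}$ and matching with $-\ell_n A'=-\ell_n(A/q_n)$, the leading $\ell_n$-coefficient simplifies to $4(n+2k)\ne 0$, forcing $q_n\mid R_1$ (where $R_1\in S^{k-2}\bbC^n$ corresponds to $R$ under the iso), equivalently $H_1=0$.

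Iterating this step forces $H_j=0$ for $j=2,3,\dots$ in succession; since only finitely many $H_j$ occur (indexed $0\le j\le\lfloor k/2\rfloor$), one concludes $D=0$. The main obstacle is the Leibniz bookkeeping required at each iteration step and the verification that the combinatorial scalars appearing at each stage (of the form $n+2(k-j)$ and $k-j$) remain nonzero; this is automatic since these factors are positive for $0\le j<k/2$ and $n\ge 2$.
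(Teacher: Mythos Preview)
Your plan is sound and can be carried to a complete proof, though the iteration is heavier than you indicate. The reduction to injectivity at $e=k$, the Leibniz identity $D(\ell_n q_n^k)=\ell_n D(q_n^k)+\tilde D(q_n^k)$ with $\tilde D=\sum_i\partial P/\partial(\partial_i)$, and the use of coprimality $\gcd(\ell_n,q_n)=1$ together with the Fischer decomposition to force $H_0=0$ are all correct. For the general step, writing $D=\Delta^{j}T_j$ with $T_j(\partial)(q_n^k)=q_n^{2j}(T_j)_1$, one finds after the bookkeeping that $B+\ell_n A'$ has $q_n$-adic leading term
\[
2(j{+}1)(n{+}2k)\,c_j\;\ell_n\,q_n^{\,j-1}(T_j)_1,
\]
where $c_j=\prod_{m=j+2}^{2j}2m\bigl(n+2m-2+2(k-2j)\bigr)>0$. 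So the crucial recurring scalar is $n+2k$ (not $n+2(k-j)$ as you wrote), and all factors are positive for $n\ge 2$; hence $q_n\mid (T_j)_1$, i.e.\ $H_j=0$, and the descent terminates.

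The paper takes a genuinely different route. Rather than the $O(n)$-harmonic decomposition with respect to $q_n$, it uses the $\frakS_n$-splitting $\bbC^n=\langle\ell_n\rangle\oplus[n-1,1]$, writes $q_n$ (after rescaling) as $\ell_n^2+g_n$ with $g_n\in S^2[n-1,1]$ nondegenerate, and grades everything by degree in $\ell_n$. Decomposing $S^e\bbC^{n*}=\bigoplus_j\langle(\ell_n^*)^j\rangle\otimes S^{e-j}[n-1,1]^*$ and applying Remark~\ref{rmk: Lemma holds for every quadric} to $g_n$, the paper shows directly that the images of the summands have distinct leading $\ell_n$-degrees and hence are independent; a dimension count then gives full rank. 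That argument avoids any iteration and produces an explicit description of the image, which is reused later in Proposition~\ref{fknkosz}. Your approach is more elementary in that it needs no $\frakS_n$-representation theory, but it is computationally heavier and does not identify the image.
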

\begin{proof}

As before, write $\bbC^n = \langle \ell_n \rangle \oplus [n-1,1]$. Up to rescaling $q_n$, write $q_n = \ell_n^2 + g_n$ as in Remark \ref{rmk: write q as ell plus g}. We have 
\[
 \tilde{f}_{n,k} = \ell_n q_n^k = \ell_n (\ell_n^2 + g_n)^k = \ell_n \textsum_0^k \textbinom{k}{j}\ell_n^{2(k-j)} g_n^{k-j} = \textsum_0^k \ell_n^{2(k-j)+1} g_n^{j} 
\]

Let $e \leq k$. In order to calculate $\rk (\tilde{f}_{e,2k+1-e})$, we study the image of $\tilde{f}_{e,2k+1-e}$ as an $\frakS_n$-module. We have $S^e \bbC^{n*} = S^e ( \langle \ell_n^* \rangle \oplus [n-1,1]^* )  =   \bigoplus_{j = 0}^e \langle (\ell_n^*)^j \rangle \otimes S^{e-j} [n-1,1]^* $, so   
\begin{equation}\label{eqn: decomposition of image of ellqk}
\begin{aligned}
 S^e ( \langle \ell_n^* \rangle \oplus [n-1,1]^* ) \cdot (\ell_n q_n^k) &= \biggl( \bigoplus_{j = 0}^e \langle (\ell_n^*)^j \rangle \otimes  S^{e-j} [n-1,1]^* \biggr) \cdot (\ell_n q_n^k)  \\
&= \sum_{j = 0}^e \langle (\ell_n^*)^j\rangle   (S^{e-j} [n-1,1]^* \cdot (\ell_n q_n^k)) .
\end{aligned}
\end{equation}
To conclude, we show that the last summation is in fact a direct sum. In the $j$-th summand, we have 
\begin{equation}\label{eqn: summation in thm ell q}
\begin{aligned}
\langle (\ell_n^*)^j\rangle   \cdot (S^{e-j} [n-1,1]^* \cdot (\ell_n q_n^k)) &=  S^{e-j} [n-1,1]^* \cdot \left[ \ell_n^{*j} \cdot  \left( \textsum_{i=0}^k \textbinom{k}{i} \ell_n^{2i+1} g_n^{k-i}\right)\right] =  \\ 
&= S^{e-j} [n-1,1]^* \cdot  \biggl[ \sum_{i=\lceil \frac{j-1}{2}\rceil}^k \textbinom{k}{i} \ell_n^{2i+1 - j} g_n^{k-i}\biggr] .
\end{aligned}
\end{equation}

If $\Delta \in S^{e-j} [n-1,1]^* $, we have $\Delta \left( \sum_{i=\lceil \frac{j-1}{2}\rceil}^k \textbinom{k}{i} \ell_n^{2i+1 - j} g_n^{k-i}\right) = \sum_{i=\lceil \frac{j-1}{2}\rceil}^{k} \textbinom{k}{i} \ell_n^{2i+1 - j} \Delta( g_n^{k-i})$. In particular the summands with $i > k-e+j$ are $0$; moreover, the $i$-th term of the summation, varying $\Delta$, ranges on the entire $\ell_n^{2i+1-j} g_n^{k-i-(e-j)} S^{e-j} [n-1,1]$ by Theorem \ref{thm: flattenings of qk hit everything} and Remark \ref{rmk: Lemma holds for every quadric}. We deduce that the leading term (in $\ell_n$) of the last line of \eqref{eqn: summation in thm ell q} ranges on the whole $\ell_n^{2(k-e+j) +1 -j } S^{e-j} [n-1,1] = \ell_n^{2k-2e+j -1} S^{e-j} [n-1,1]$. This shows that the summands in \eqref{eqn: decomposition of image of ellqk} are linearly independent because their leading terms have different degrees, the $j$-th one having degree in $\ell_n$ equal to $2k-2e+j -1$.

From \eqref{eqn: decomposition of image of ellqk}, we deduce that rank of $\tilde{f}_{e,2k+1-e}$ is 
\begin{align*}
 \dim (S^e \bbC^{n*} \cdot \tilde{f}_{e,2k+1-e} ) &= \sum_{j = 0}^e \dim S^{e-j} [n-1,1] = \\ &= \sum_{j=0}^e \binom{e-j+n-2}{e-j} = \\ &=\sum_{j=0}^e \binom{n-2+j}{j} = \binom{e+n-1}{e} = \dim S^e \bbC^{n*}.
\end{align*}
hence, $\tilde{f}_{e,2k+1-e}$ is injective and this conclude the proof when $e \leq k$. Since $f_{e,2k+1-e}$ is the transpose of $f_{2k+1-e,e}$, the proof is complete.
\end{proof}

\section{Two auxiliary results}

  \begin{proposition} \label{vpeprop}
 Let $n(m),k(m)$ be   polynomially bounded functions of $m$. Then the sequences $\{f_{n,k}\}_m$ and $\{\tilde f_{n,k}\}_m$ are   in the algebraic complexity class $\bfVP_e$ of sequences admitting polynomial size formulas.
 \end{proposition}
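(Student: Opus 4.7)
The plan is to exhibit an explicit formula of polynomial size for each polynomial in the two sequences. Recall that in a formula every internal gate has fan-out one, but input variables may appear at arbitrarily many leaves.

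First, observe that the quadratic form $q_n = x_1^2 + \cdots + x_n^2$ admits a formula $F_{q}$ of size $O(n)$: compute each $x_i^2$ as $x_i \cdot x_i$ (two leaves labelled by the same variable) and sum the $n$ results. Similarly, $\ell_n = x_1 + \cdots + x_n$ admits a formula $F_{\ell}$ of size $O(n)$.

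Next, to produce a formula for $f_{n,k} = q_n^k$, simply form the product $q_n \cdot q_n \cdots q_n$ with $k$ factors, where each factor is computed by a fresh copy of $F_q$. This is a legal formula because the $k$ copies of $F_q$ share only input variables, and the $k-1$ multiplication gates that chain them together each have fan-out one (into the next multiplication, except for the last which is the output). The total size is $O(nk)$. For $\tilde f_{n,k} = \ell_n \cdot f_{n,k}$, take the formula above for $f_{n,k}$ and feed it, together with a single copy of $F_\ell$, into one further multiplication gate; this gives size $O(nk)$ as well.

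Since $n(m)$ and $k(m)$ are polynomially bounded in $m$, so is the product $n(m)k(m)$, and hence both $\{f_{n(m),k(m)}\}$ and $\{\tilde f_{n(m),k(m)}\}$ admit formulas of size polynomial in $m$, placing them in $\bfVP_e$. There is no real obstacle here beyond keeping in mind that we are not allowed to share intermediate gates; the redundant recomputation of $q_n$ in each of the $k$ factors is precisely what forces the bound $O(nk)$ rather than $O(n\log k)$ that one would obtain from a circuit using repeated squaring.
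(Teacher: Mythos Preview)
Your proof is correct and follows exactly the same approach as the paper: exhibit the obvious formula $q_n \cdots q_n$ ($k$ copies) of size $O(nk)$, then multiply once more by $\ell_n$ for $\tilde f_{n,k}$. The paper's version is terser (it just writes down the formula and counts the size as $k(3n-1)$), but the argument is identical.
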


\begin{proof}
The standard expression
\begin{align*}
f_{n,k} &= q_n \cdots q_n \quad \text{($k$ times), with} \\
q_n &= x_1 \cdot x_1 + \cdots + x_n \cdot x_n
\end{align*}
gives a formula of size $k(3n-1)$ for $f_{n,k}$. The additional $\ell_n = x_1 + \cdots + x_n$ and $\tilde{f}_{n,k} = \ell_n \cdot f_{n,k}$ provides a formula of size $k(3n-1)+n$ for $\tilde{f}_{n,k}$.
\end{proof}

\begin{proposition}\label{prop: IMM specializes to fnk}
If $n = 2m$ (resp. $n = 2m +1$) then the polynomial $f_{n,k}$ is a specialization of the matrix powering
polynomial $\Pow^{2k}_{m+1}$ (resp. $\Pow^{2k}_{m+2}$).
If $n = 2m$ (resp. $n = 2m +1$) then the polynomial $\tilde{f}_{n,k}$ is a specialization of the iterated matrix multiplication
polynomial $\IMM^{2k+1}_{m+2}$ (resp. $\IMM^{2k+1}_{m+3}$).
\end{proposition}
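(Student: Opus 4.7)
My plan is to exhibit both specializations explicitly by writing down appropriate matrices, built from a factorization $q_n = u^T v$ where $u, v$ are column vectors of linear forms. For $n = 2m$, take $u, v \in \bbC^m$ with $u_j = x_{2j-1} + \sqrt{-1}\, x_{2j}$ and $v_j = x_{2j-1} - \sqrt{-1}\, x_{2j}$, so that $u_j v_j = x_{2j-1}^2 + x_{2j}^2$ and $u^T v = q_n$; for $n = 2m+1$, extend by setting $u_{m+1} = v_{m+1} = x_n$, giving vectors in $\bbC^{m+1}$. Let $r$ denote the common length of $u$ and $v$.

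For the $\Pow$ case, form the arrowhead matrix
$$A = \begin{pmatrix} 0 & u^T \\ v & 0 \end{pmatrix}$$
of size $(r+1) \times (r+1)$. A short block computation gives
$$A^2 = \begin{pmatrix} u^T v & 0 \\ 0 & vu^T \end{pmatrix} = \begin{pmatrix} q_n & 0 \\ 0 & vu^T \end{pmatrix},$$
and since $vu^T$ has rank one with trace $u^T v = q_n$, one has $(vu^T)^k = q_n^{k-1}\, vu^T$, whose trace is $q_n^k$. Hence $\trace(A^{2k}) = 2 q_n^k = 2 f_{n,k}$. Because $r+1 = m+1$ if $n = 2m$ and $r+1 = m+2$ if $n = 2m+1$, this realizes $f_{n,k}$, up to a nonzero overall scalar absorbed by a diagonal endomorphism of the ambient space, as a specialization of $\Pow^{2k}_{r+1}$.

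For the $\IMM$ case, enlarge $A$ by a zero row and column to the $(r+2) \times (r+2)$ matrix $\tilde A = \begin{pmatrix} A & 0 \\ 0 & 0 \end{pmatrix}$, so that $\trace(\tilde A^{2k}) = \trace(A^{2k}) = 2 f_{n,k}$, and let $M_0 = \ell_n \Id_{r+2}$, whose entries are linear forms (each being either $\ell_n$ or $0$). Setting $M_1 = \cdots = M_{2k} = \tilde A$ gives
$$\IMM^{2k+1}_{r+2}(M_0, M_1, \dots, M_{2k}) = \trace(M_0 \tilde A^{2k}) = \ell_n \trace(\tilde A^{2k}) = 2 \tilde f_{n,k},$$
so $\tilde f_{n,k}$ is a specialization of $\IMM^{2k+1}_{r+2}$, with $r+2 = m+2$ when $n = 2m$ and $r+2 = m+3$ when $n = 2m+1$, as claimed. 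The only nontrivial computational content is the two-line block computation for $A^2$ together with the rank-one trace identity, so no real obstacle arises beyond matching dimensions.
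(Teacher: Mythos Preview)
Your proof is correct and uses the same arrowhead matrix as the paper (your $A$ is exactly the paper's $Q_m$, with $u_j=y_j^+$ and $v_j=y_j^-$). The only difference is in how the trace is computed: the paper determines the characteristic polynomial of $Q_m$ to read off the nonzero eigenvalues $\pm\sqrt{q_n}$, whereas you compute $A^2$ directly as a block-diagonal matrix and then use the rank-one identity $(vu^T)^k=q_n^{k-1}vu^T$. Your route is arguably more elementary, since it avoids any appeal to eigenvalues and works purely by block multiplication; the paper's route has the mild advantage of explaining \emph{why} the trace of an odd power vanishes (the eigenvalues come in $\pm$ pairs). For the $\IMM$ case you are slightly more explicit than the paper: you pad $A$ by a zero row and column to match the sizes $m+2$ and $m+3$ in the statement, whereas the paper simply says ``specializing the remaining matrices to the matrix above,'' leaving the size adjustment implicit. (In fact the padding is not strictly necessary---$\ell_n\Id_{r+1}\cdot A^{2k}$ already works one dimension lower---but it does no harm and makes the stated bounds transparent.)
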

\begin{proof}
Let $n = 2m+1$ and set $y_j^\pm = x_{2j-1} \pm \sqrt{-1} x_{2j}$ for $j = 1 \vvirg m$. Consider the specialization of $\Pow^{2k}_{m+2}$ to the matrix
\[
 Q_m = \left( \begin{array}{ccccc}
  0 & y_1^+ & \cdots & y_m^+ & x_n \\
  y_1^- & & & & \\
  \vdots &  &\bf0 & &\\
  y_m^- & & & & \\
  x_n & & & & 
 \end{array}\right),
\]
of size $m+2$.

We show that $\Pow^{2k}_{m+2}(Q_m) = f_{n,k}$, up to scale. The characteristic polynomial of $Q_m$ is
\[
 \det(Q_m - t\Id_{m+2}) = (-1)^{m+2} t^{m}\left( t^2 - \textsum_j y_j^+ y_j^- - x_n^2\right) = (-1)^{m+2} t^{m}\left( t^2 - q_n\right).
\]
Thus, the nonzero eigenvalues of $Q_m$ (as functions of $x_1\vvirg x_n$),  are $\pm \sqrt{q_n}$. In particular, $\Pow^{d}_{m+2}(Q_m) = (\sqrt{q_n})^d + (-\sqrt{q_n})^d$ is $0$ if $d$ is odd and it is $2 q_n^k = 2f_{n,k}$ if $d = 2k$ is even.

If $n = 2m$ is even, apply the same argument to the matrix obtained from $Q_m$ by removing the last row and the last column.

Similarly, $\tilde f_{n,k}$ is a specialization of $\IMM^{2k+1}_{m+2}$ or $\IMM^{2k+1}_{m+3}$ (depending on the parity of $n$) by making the first matrix $\ell_n\Id$ and specializing the remaining matrices to the matrix above.
\end{proof}

\section{Proof of Theorem \ref{noshiftthm}}

Since $ n \geq m^5$, we may choose a linear inclusion $\BC^{m^2}\subseteq \BC^{mn^2}$ and regard   $\perm_m\in S^m \bbC^{mn^2}$. Our goal is to show that for every $s,\tau$
\[
\dim \langle \partial^{=s} \IMM^m_n \rangle_{=\tau} \geq \dim \langle \partial^{=s} \perm_m \rangle_{=\tau}.
\]

We split the proof into three cases. In the first and in the second case, we degenerate $\IMM^m_n$ to $f_{n,k}$ if $m=2k$ is even and to $\tilde{f}_{n,k}$ if $m=2k+1$ is odd. This is possible by Proposition \ref{prop: IMM specializes to fnk}. Write $F_{m,n}$ for either $f_{n,k}$ or $\tilde f_{n,k}$ in what follows. Since $\IMM^m_n$ degenerates to $F_{m,n}$, we have $\dim \langle \partial^{=s} \IMM^m_n \rangle_{=\tau} \geq \dim \langle \partial^{=s} F_{m,n} \rangle_{=\tau}$. 

In the third case, we specialize $\IMM^m_n$ to the power sum polynomial of degree $m$ in $m^2$ variables $y_1^{m} + \cdots + y_{m^2}^m$ by specializing every argument of $\IMM^m_n$ to the diagonal matrix of size $n \times n$ with $y_1 \vvirg y_{m^2}$ in the first $m^2$ diagonal entries and $0$ elsewhere.

{\bf Case 1}:  $s \geq \lceil \frac{m}{2}\rceil $. We show that $\dim \langle \partial^{=s} F_{m,n} \rangle_{=\tau} \geq \dim \langle \partial^{=s} \perm_m \rangle_{=\tau}$ when $s\geq\lceil\frac m2\rceil$. Up to the action of $GL_{mn^2}$, assume $\bbC^{m^2} \subseteq \bbC^n \subseteq \bbC^{mn^2}$, where $\bbC^{m^2}$ is the space spanned by the variables of $\perm_m$ and $\bbC^n$ is the space spanned by the variables of $F_{m,n}$. It will suffice to prove $\dim  \langle \partial^{=s}(\perm_m \in S^m\BC^{n})\rangle_{=\tau}\leq \dim  \langle \partial^{=s}F_{m,n} \in S^m\BC^{n}\rangle_{=\tau}$ because the remaining $mn^2-n$ variables will contribute the same growth to the ideals $\calJ_s(\perm_m)$ and $\calJ_s(F_{m,n})$. Since $s\geq \lceil \frac{m}{2}\rceil$, $(F_{m,n})_{s, m-s}$ surjects onto $S^{m-s} \bbC^n$ by Theorem \ref{thm: flattenings of qk hit everything} and Theorem \ref{thm: flattenings of ellqk hit everything}, and thus, for every shift $\tau$, the shifted partial derivative map surjects onto $S^{ m-s +\tau}
\BC^n$ for all $\tau$. This shows $\langle \partial^{=s}F_{m,n} \in S^m\BC^{n}\rangle_{=\tau} = S^{ m-s +\tau}\BC^n$ and 
proves this case.

{\bf Case 2}: $s <\lceil \frac m2\rceil $ and $\tau< 2m^3$. Again, it suffices to prove $\dim  \langle \partial^{=s} \perm_m\in S^m\BC^{n} \rangle_{=\tau}\leq \dim  \langle \partial^{=s}F_{m,n} \in  S^m\BC^{n} \rangle_{=\tau}$. Since $s \leq \lceil m/2 \rceil$, $(F_{m,n})_{s,m-s}$ is injective, and its partials have image of dimension $\binom{n+s-1}{s}$. Corollary 2.4 of \cite{2016arXiv160902103E} states that   any subspace of $S^{m-s}\BC^n$ of dimension   $\binom{n+s-1}s$ generates
an ideal that in degree $m-s+\tau$ has dimension at least $\binom {n+s+\tau-1}{s+\tau}$; this is a consequence of a general result on the growth of ideals known as Macaulay's Theorem (see e.g. \cite{Stan:HilbertFunGradedAlgebras}). Thus
\[
 \dim  \langle \partial^{=s}F_{m,n} \in  S^m\BC^{n} \rangle_{=\tau} \geq \binom {n+s+\tau-1}{s+\tau},
\]
(and equality holds in the case $m$ is even).

We compare this with the crude estimate for $\perm_m$ that ignores syzygies of its $s$-th Jacobian ideal. The space $\langle \partial^{=s} \perm_m\subseteq S^m\BC^n \rangle_{=0}$ has dimension $\binom{m}{s}^2$, because $s$-th partial derivatives of $\perm_m$ are subpermanents of size $m-s$, so there is one for every choice of $s$ rows and $s$ columns of the matrix. Ignoring syzygies of $\calJ_s(\perm_m)$,
\[
{\binom m s}^2\binom{n+\tau-1}{\tau} \geq \dim  \langle \partial^{=s}(\perm_m\subseteq S^m\BC^n) \rangle_{=\tau}.
\]
We will conclude that $\binom {n+s+\tau-1}{s+\tau}>{\binom m s}^2\binom{n+\tau-1}{\tau}$ in the range we consider. This  is equivalent to
\begin{equation}\label{Nestimate}
\frac{(n+s+\tau-1)( n+s+\tau-2)\cdots (n+\tau)}{(\tau+s)(\tau+s-1)\cdots (\tau+1)}>{\binom{m}{s}}^2.
\end{equation}
The left hand side is bounded from below by $\frac{(n+\tau)^s}{(\tau + s)^s}$ and the right hand side is bounded from above by $m^{2s}$, so that a sufficient condition for \eqref{Nestimate} is 
 \[
  \frac{n+\tau}{\tau+s} > m^2.
 \]
which holds when $n \geq m^5$  and $\tau \leq 2m^3$ as $s \leq m$.

{\bf Case 3}: $s <\frac{m}{2}$ and $\tau>m^3$. Here set all matrices $(X_1 \vvirg X_m)$ equal to a matrix that is  zero except for  the first $m^2$ entries on the diagonal, call them $y_1 \vvirg y_{m^2}$. The resulting degeneration of $\IMM^m_n$  is $y_1^m+\cdots + y_{m^2}^m$. As in Case 1, it will suffice to prove the result for  the shifted partials of both polynomials in $m^2$ variables because the remaining $mn^2-m^2$ variables will contribute the same growth to both  ideals. The space of partial derivatives of order $s$ is $\langle y_1^{m-s} \vvirg y_{m^2}^{m-s}\rangle$. The image of the $\tau$-th shifted partial map consists of all polynomials in $m^2$ variables of degree $m-s+\tau$ as soon as $m-s+\tau> m^2(m-s-1)+1$, so that every monomial of degree $m-s+\tau$ is divisible by at least one power of order $m-s$. In particular, the shifted partials derivative map is surjective whenever when $\tau>m^3$.

\section{Complete symmetric functions}

 Recall that  $h_{n,d}$ is the complete symmetric function of degree $d$ in $n$ variables:
\[
 h_{n,d} = \sum_{\vert \alpha \vert = d} \bfx^\alpha,
\]
 where the summation is over all multi-indices $\alpha = (\alpha_1 \vvirg \alpha_n)$ with $\alpha_1 + \cdots + \alpha_n = d$.

\begin{proposition}\label{prop: derivatives of h}
 For every monomial $\bfx^\beta$ with $\vert \beta \vert = e \leq d$, we have 
 \[
  \frac{\partial^e}{\partial \bfx^\beta} h_{n,d} = \beta! \cdot h_{n+e , d-e} (x_1 \vvirg x_n, \bfx^{(\beta)}),
 \]
where $\bfx^{(\beta)} = (\underbrace{x_1 \vvirg x_1}_{\beta_1} \vvirg \underbrace{x_n \vvirg x_n}_{\beta_n})$ and $\beta! = \beta_1! \cdots \beta_n!$. In particular the image of the flattening $(h_{n,d})_{e,d-e}$ is the space
\[
\left\langle  h_{n+e,d-e}(x_1 \vvirg x_n,\bfx^{(\beta)}) : \vert \beta \vert = e \right\rangle \subseteq S^{d-e} \bbC^n.
\]
\end{proposition}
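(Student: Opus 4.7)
The plan is to deduce the formula from the classical generating function identity for complete homogeneous symmetric polynomials,
\[
\sum_{d \geq 0} h_{n,d}(x_1 \vvirg x_n)\, t^d = \prod_{i=1}^n \frac{1}{1-x_i t}.
\]
First I would differentiate both sides $\beta_i$ times with respect to each $x_i$, using the elementary identity $\tfrac{\partial^k}{\partial x^k}(1-xt)^{-1} = k!\, t^k (1-xt)^{-(k+1)}$. The right-hand side then becomes
\[
\beta!\, t^e \prod_{i=1}^n \frac{1}{(1-x_i t)^{\beta_i+1}}.
\]

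The key observation is that this last product is precisely the generating function of the complete symmetric polynomials in the multiset of $n+e$ variables $(x_1 \vvirg x_n, \bfx^{(\beta)})$, in which $x_i$ appears with multiplicity $\beta_i+1$. Extracting the coefficient of $t^d$ on both sides, after accounting for the overall factor $t^e$, yields
\[
\frac{\partial^e h_{n,d}}{\partial \bfx^\beta} = \beta! \cdot h_{n+e,\, d-e}(x_1 \vvirg x_n, \bfx^{(\beta)}),
\]
which is the first claim. The second statement is then immediate: the monomial differential operators $\tfrac{\partial^e}{\partial \bfx^\beta}$ with $|\beta|=e$ span $S^e \bbC^{n*}$, and the scalars $\beta!$ are nonzero, so the image of $(h_{n,d})_{e,d-e}$ is exactly the span of the polynomials $h_{n+e,d-e}(x_1 \vvirg x_n, \bfx^{(\beta)})$.

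I do not foresee any real obstacle; the argument is a short manipulation of generating functions, with the main conceptual point being that differentiating $\beta_i$ times in $x_i$ corresponds to formally repeating the variable $x_i$ with multiplicity $\beta_i+1$ in $h$. A purely combinatorial alternative is also available: one can expand $\tfrac{\partial^e h_{n,d}}{\partial \bfx^\beta}$ monomial by monomial and verify that the coefficient of $\bfx^\gamma$ on both sides equals $\beta!\prod_i \binom{\gamma_i+\beta_i}{\beta_i}$, using the elementary fact that $\binom{\gamma_i+\beta_i}{\beta_i}$ counts the distributions of $\gamma_i$ units of degree among $\beta_i+1$ copies of $x_i$. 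The generating function route is cleaner and makes the appearance of the repeated-variable evaluation conceptually transparent.
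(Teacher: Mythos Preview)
Your generating-function argument is correct and complete. It is genuinely different from the paper's proof, which proceeds by induction on $e$: the base case $e=1$ is handled by writing $h_{n,d}=\sum_{j} x_n^j h_{n-1,d-j}$, differentiating in $x_n$, and performing a telescoping rearrangement to recognize $h_{n+1,d-1}(x_1,\ldots,x_n,x_n)$; the inductive step then peels off one derivative at a time via the chain rule. Your route bypasses this entirely: because the generating function is a product of factors each depending on a single $x_i$, the mixed derivative $\partial^\beta$ acts factorwise, and the identity $\partial_{x}^{k}(1-xt)^{-1}=k!\,t^k(1-xt)^{-(k+1)}$ immediately produces the generating function for $h$ in the multiset of repeated variables. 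This is shorter and makes the appearance of $\bfx^{(\beta)}$ conceptually transparent. Your combinatorial alternative (matching the coefficient $\beta!\prod_i\binom{\gamma_i+\beta_i}{\beta_i}$ on both sides) is essentially the content of the paper's \emph{next} proposition, so pursuing that route would collapse two results into one computation; the paper instead keeps them separate, proving the coefficient formula afterwards by its own induction.
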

\begin{proof}
 We proceed by induction on $e$. If $e = 1$, suppose $\bfx^\beta = x_n$ and write $h_{n,d} = \sum_{j=0}^d x_n^j h_{n-1, d-j}(x_1 \vvirg x_{n-1})$, so that
 \begin{align*}
  \frac{\partial}{\partial x_n} h_{n,d} &= \sum_{j=0}^d j \cdot x_n^{j-1} h_{n-1, d-j}(x_1 \vvirg x_{n-1}) = \\
  &= \sum_{\ell=0}^{d-1} (\ell+1) \cdot x_n^\ell h_{n-1, d-\ell -1}(x_1 \vvirg x_{n-1}) = \\
  & = \left[\sum_{\ell=0}^{d-1} x_n^\ell h_{n-1, d-\ell -1}(x_1 \vvirg x_{n-1})\right] + x_n\left[\sum_{\ell=0}^{d-2} (\ell+1) \cdot x_n^\ell h_{n-1, d-\ell -2}(x_1 \vvirg x_{n-1})\right]
 \end{align*}
where the first summation in the last line contains one term from each summand in the previous line, and the second summation contains the remaining terms (with shifted indices). The first summation adds up to $h_{n,d-1} (x_1 \vvirg x_n)$; by repeating this on the second summation we obtain
\[
h_{n,d-1} (x_1 \vvirg x_n) + x_n h_{n,d-2} (x_1 \vvirg x_n) + x_n^2 \left[\sum_{\ell=0}^{d-3} (\ell+1) \cdot x_n^\ell h_{n-1, d-\ell -3}(x_1 \vvirg x_{n-1})\right]
\]
and iterating this process we obtain $\sum_{j = 0}^d x_n^j h_{n,d-1 -j} (x_1 \vvirg x_n) = h_{n+1,d-1}(x_1 \vvirg x_n,x_n)$ proving the base case.

Let $e \geq 1$ and suppose $\beta_1 \geq 1$. Let $\gamma = (\beta_1-1,\beta_2 \vvirg \beta_n)$. We have
 \begin{align*}
  \frac{\partial^e}{\partial \bfx^\beta} h_{n,d} &= \frac{\partial }{\partial x_1} \frac{\partial ^{e-1}}{\partial \bfx^\gamma} h_{n,d} = \\
  &= \gamma! \cdot \frac{\partial}{\partial x_1} h_{n+e-1,d-e+1}(x_1 \vvirg x_n, \bfx^{(\gamma)}).
  \end{align*}
 By chain rule and by symmetry
  \begin{align*}
\frac{\partial}{\partial x_1} &h_{n+e-1,d-e+1}(x_1 \vvirg x_n, \bfx^{(\gamma)}) = \\
&=(\gamma_1+1) \frac{\partial}{\partial y_1}\bigl\vert_{(x_1 \vvirg x_n, \bfx^{(\gamma)},x_1)} \bigr. h_{n+e-1,d-e+1}(y_1 \vvirg y_{n+e-1}) = \\ 
&=\beta_1 h_{n+e,d-e}((x_1 \vvirg x_n, \bfx^{(\gamma)},x_1)),
  \end{align*}
  where we used the case $e = 1$ again. Since $\gamma! \cdot \beta_1 = \beta!$, we conclude.
\end{proof}

\begin{proposition}\label{prop: coefficients of monomials is expanded h}
For any choice of multi-indices $\beta,\gamma$ with $\vert \beta \vert = p$ and $\vert \gamma \vert = e$, the coefficient of $\bfx^\gamma$ in $h_{n+p,e} ( x_1 \vvirg x_n, \bfx^{(\beta)})$ is 
\[
 \prod_{i=1}^n \binom{\beta_i + \gamma_i}{\gamma_i}.
\]
\end{proposition}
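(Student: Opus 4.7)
The plan is to use the generating function for complete symmetric polynomials, which is the cleanest route. Recall that
\[
\textsum_{e \geq 0} h_{N,e}(z_1 \vvirg z_N)\, t^e = \textprod_{j=1}^N \frac{1}{1 - z_j t}.
\]
I would first observe that in the tuple $(x_1 \vvirg x_n, \bfx^{(\beta)})$, each variable $x_i$ appears exactly $\beta_i + 1$ times (once in the initial block, $\beta_i$ times in $\bfx^{(\beta)}$). Applying the generating function identity to the point $(x_1 \vvirg x_n, \bfx^{(\beta)}) \in \bbC^{n+p}$ yields
\[
\textsum_{e \geq 0} h_{n+p,e}(x_1 \vvirg x_n, \bfx^{(\beta)})\, t^e = \textprod_{i=1}^n \frac{1}{(1 - x_i t)^{\beta_i + 1}}.
\]

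Next, expand each factor using the negative binomial series:
\[
\frac{1}{(1 - x_i t)^{\beta_i + 1}} = \textsum_{k_i \geq 0} \binom{\beta_i + k_i}{k_i} x_i^{k_i} t^{k_i}.
\]
Multiplying out and collecting the coefficient of $t^e$, I would obtain
\[
h_{n+p,e}(x_1 \vvirg x_n, \bfx^{(\beta)}) = \sum_{k_1 + \cdots + k_n = e} \textprod_{i=1}^n \binom{\beta_i + k_i}{k_i}\, x_i^{k_i}.
\]
Reading off the coefficient of the monomial $\bfx^\gamma$ corresponds to the term $(k_1 \vvirg k_n) = (\gamma_1 \vvirg \gamma_n)$, giving the desired value $\prod_{i=1}^n \binom{\beta_i + \gamma_i}{\gamma_i}$.

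There is no real obstacle here; the only thing to verify carefully is the multiplicity count (that $x_i$ appears $\beta_i + 1$ times in the substituted tuple), which follows directly from the definition of $\bfx^{(\beta)}$. If preferred, one can give a purely combinatorial proof by the same bookkeeping: $h_{n+p,e}$ is the sum of all degree-$e$ monomials in $n+p$ formal variables; after the substitution, the monomials producing $\bfx^\gamma$ correspond to placing, for each $i$, a total degree of $\gamma_i$ into the $\beta_i + 1$ slots mapping to $x_i$, and the number of such placements is the stars-and-bars count $\binom{\beta_i + \gamma_i}{\gamma_i}$. Either presentation is very short; I would opt for the generating function version because it makes the multiplicative structure transparent.
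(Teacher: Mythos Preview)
Your proof is correct and takes a genuinely different route from the paper's. The paper argues by induction on $p$: for the inductive step it peels off one of the repeated variables by writing $h_{n+p,e}(\bfy) = \sum_{j=0}^e y_1^{j} h_{n+p-1,e-j}(y_2 \vvirg y_{n+p})$, substitutes, applies the inductive hypothesis to each summand, and then collapses the resulting sum using the hockey-stick identity $\sum_{j=0}^{\gamma_1} \binom{\beta_1 -1 +j}{j} = \binom{\beta_1+\gamma_1}{\gamma_1}$.

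Your generating-function argument is more direct and conceptually transparent: it explains \emph{why} the answer is a product of binomial coefficients (one factor per variable, arising from the factorization of $\prod_i (1-x_it)^{-(\beta_i+1)}$), whereas in the inductive proof the multiplicative structure only emerges at the end. The stars-and-bars variant you mention is essentially the same computation phrased combinatorially and is equally valid. The paper's approach has the minor virtue of being self-contained (no appeal to the generating function identity for $h_{N,e}$), but that identity is standard enough that this is not a real cost.
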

\begin{proof}
Write $[f]_{\gamma}$ for the coefficient of $\bfx^\gamma$ in the polynomial $f$.

We use induction on $p$. If $p = 0$, then $h_{n+p,e} ( x_1 \vvirg x_n, \bfx^{(\beta)}) = h_{n,e}$ and for every $\gamma$ we have $[h_{n,e}]_\gamma = 1 =  \textprod_{i=1}^n \binom{\gamma_i}{\gamma_i}$.

Let $p\geq 1$ and suppose $\beta_1 \geq 1$. Write $h_{n+p,e}(\bfy) = \sum_{j=0}^e y_1^{j} h_{n+p-1,e-j}(y_2 \vvirg y_{n+p})$. Let $\eta_1 = (1 ,0\vvirg 0) \in \bbZ^{n}_{\geq 0}$. We have
\[
 [h_{n+p,e}(x_1 \vvirg x_n,\bfx^{(\beta)})]_\gamma = \sum_{j=0}^{\gamma_1} [h_{n+p-1,e-j}(x_1 \vvirg x_n, \bfx^{(\beta-\eta_1)})]_{\gamma-j\eta_1}.
\]
Apply the inductive hypothesis to the summands of the right hand side  to get 
\begin{align*}
 [h_{n+p,e}(x_1 \vvirg x_n,\bfx^{(\beta)})]_\gamma &= \sum_{j=0}^{\gamma_1} \left[\binom{\gamma_1 - j + \beta_1 -1 }{\gamma_1 -j} \cdot \prod_{i=2}^n \binom{\beta_i + \gamma_i}{\gamma_i}\right]  = \\ 
 &= \left( \sum_{j=0}^{\gamma_1} \binom{\beta_1 -1 +j}{j}\right) \prod_{i=2}^n \binom{\beta_i + \gamma_i}{\gamma_i}  =\\
 &= \prod_{i=1}^n \binom{\beta_i + \gamma_i}{\gamma_i}
\end{align*}
\end{proof}

Proposition \ref{prop: coefficients of monomials is expanded h} shows that the entries of the matrix representing the partial derivatives map of $h_{n,d}$ in the monomial basis are products of binomial coefficients with a special combinatorial structure. Matrices with this structure are the object of study of the Lindstr\"om-Gessel-Viennot theory on totally nonnegative matrices (see e.g. \cite{MR815360}). We will provide some results on matrices with this structure which will be used to prove that the matrices described in Proposition \ref{prop: coefficients of monomials is expanded h} have full rank.

Let $a_1 \vvirg a_N \in \bbZ_{\geq 0}$ be nonnegative integers and let $G(a_1 \vvirg a_N)$ be the $N \times N$ symmetric matrix whose $(i,j)$-th entry is $\binom{a_i+a_j}{a_i}$. The Lindstr\"om-Gessel-Viennot Lemma (see \cite[\S2]{MR815360}) guarantees that $G(a_1 \vvirg a_N)$ is a totally nonnegative matrix (in the sense that every minor is nonnegative), and its rank is equal to the number of distinct $a_i$'s. In particular, $G(a_1 \vvirg a_N)$ is always positive semidefinite and it is positive definite if and only if the $a_i$'s are distinct. Moreover if $a_{i_1} = a_{i_2}$ for some $i_1,i_2$, then the $i_1$-th and $i_2$-th rows are equal. 

Given two matrices $A,B$ of the same size, define $A \odot B$ to be the \emph{Hadamard product} of $A$ and $B$. For vectors $\bfa_1 \vvirg \bfa_N \in \bbZ^{m}_{\geq 0}$, with $\bfa_i = (a_{ij}) _{j = 1 \vvirg m}$ define 
\[
 G(\bfa_1 \vvirg \bfa_N) := \bigodot_{i=1}^m G(a_{1,i}\vvirg a_{N,i}).
\]
Our goal is to prove that $G(\bfa_1 \vvirg \bfa_N)$ is positive definite if the  $\bfa_i$ are distinct.

We will need the following two technical results.  Given a matrix $A$ we denote by $A^i_\bullet$ (resp. $A^\bullet_i$) the $i$-th row (resp. column) of $A$ and by $A^I_J$ the submatrix consisting of rows in the set of indices $I$ and columns in the set of indices $J$. 

\begin{lemma}[\cite{266847}]\label{lemma: principal minor on indep rows}
 Let $A$ be symmetric, positive semidefinite. Let $I = \{ i_1 \vvirg i_r\}$ be a set of indices such that the $r$ vectors $\{ A^\bullet_i\}_{i \in I}$ are linearly independent. Then the principal submatrix $A^I_I$ of $A$ has full rank.
\end{lemma}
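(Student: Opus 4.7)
The plan is to prove that the square matrix $A^I_I$ is injective, which is equivalent to full rank. The main tool I would use is a classical fact about positive semidefinite matrices: if $A \succeq 0$ is real symmetric, then $w^{T} A w = 0 \iff A w = 0$. This is proved in two lines by writing $A = B^{T} B$, so that $w^{T} A w = \|Bw\|^2$; vanishing of this norm forces $Bw = 0$ and therefore $Aw = B^{T}(Bw) = 0$.

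Given this, my argument runs as follows. Take any $v \in \bbR^r$ with $A^I_I v = 0$, and extend $v$ to a vector $\tilde v \in \bbR^N$ by padding with zeros outside $I$, i.e., $\tilde v_{i_j} = v_j$ for $j = 1 \vvirg r$ and $\tilde v_i = 0$ for $i \notin I$. Since $\tilde v$ is supported on $I$, the only entries of $A$ that contribute to the quadratic form are those in the principal block, so
\[
\tilde v^{T} A \tilde v = v^{T} A^I_I v = 0.
\]
By the key fact, $A \tilde v = 0$, which expands as $\sum_{i \in I} \tilde v_i \, A^{\bullet}_i = 0$. The linear-independence hypothesis on the columns $\{A^{\bullet}_i\}_{i \in I}$ now forces every $\tilde v_i$ to vanish, so $v = 0$. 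This shows $A^I_I$ is injective, hence of full rank.

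I do not anticipate a real obstacle: every step is a routine move in linear algebra, and the pivotal implication $w^{T} A w = 0 \Rightarrow A w = 0$ is classical. The only thing that deserves a moment of care is the bookkeeping that verifies $\tilde v^{T} A \tilde v = v^{T} A^I_I v$, but this is immediate once one notes that only indices in $I \times I$ pair nonzero components of $\tilde v$ against nonzero components of $\tilde v$. An essentially equivalent formulation, which one could use instead, is to factor $A = B^{T} B$, observe that $A^I_I$ is the Gram matrix of the columns $\{B^{\bullet}_i\}_{i \in I}$, and then verify that linear independence of $\{A^{\bullet}_i\}_{i \in I}$ transfers to $\{B^{\bullet}_i\}_{i \in I}$ via the orthogonality of $\ker(B^{T})$ and $\mathrm{range}(B)$.
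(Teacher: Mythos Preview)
Your proof is correct and follows essentially the same route as the paper: both pad a vector in $\ker(A^I_I)$ with zeros, use the factorization $A = B^T B$ to pass from $\tilde v^T A \tilde v = 0$ to $A\tilde v = 0$, and then invoke linear independence of the columns $\{A^\bullet_i\}_{i\in I}$. The only cosmetic difference is that you isolate the implication $w^T A w = 0 \Rightarrow A w = 0$ as a separate fact, whereas the paper performs that step inline.
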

\begin{proof}
 Without loss of generality, suppose $I = \{ 1 \vvirg r\}$ and let $R = A^I_I$. We want to prove that $R$ is full rank, namely that $Ru = 0$ for some $u \in \bbR^r$ implies $u =0$. Let $v \in \bbR^n$ such that $v_i = u_i$ if $i\leq r$ and $v_i = 0$ if $i > r$. Since $A$ is positive semidefinite, write $A = B^T B$. We have 
 \[
0 =  u^T R u = v^T A v = v^T B^T B v = \Vert Bv \Vert.
 \]
In particular $Bv = 0$, therefore $Av =0$; since the first $r$ columns of $A$ are linearly independent, we deduce $v = 0$, so that $u =0$ and $R$ is nonsingular.
\end{proof}

\begin{lemma}\label{lemma: posdef odot positive psd is pos def}
Let $A,B$ be symmetric $N \times N$ matrices such that $A$ is positive definite and $B$ is positive semidefinite with strictly positive diagonal entries. 
Then $A \odot B$ is positive definite.
\end{lemma}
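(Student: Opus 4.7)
The plan is to proceed via the standard decomposition argument behind the Schur product theorem, then use the strict positivity of the diagonal of $B$ to upgrade positive semidefiniteness of $A \odot B$ to positive definiteness. First, since $B$ is symmetric positive semidefinite, by the spectral theorem I can write $B = \sum_{k=1}^{N} b_k b_k^T$ for some vectors $b_k \in \bbR^N$ (the scaled eigenvectors corresponding to the nonnegative eigenvalues).

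Next, for an arbitrary $x \in \bbR^N$, I expand $x^T(A \odot B)x$ by substituting this decomposition of $B$ into the entry formula $(A \odot B)_{ij} = A_{ij} B_{ij}$ and swapping the order of summation. Writing $D_k = \mathrm{diag}((b_k)_1, \dots, (b_k)_N)$, a direct computation gives
\[
x^T (A \odot B) x = \sum_{k=1}^{N} \sum_{i,j} \bigl((b_k)_i x_i\bigr) A_{ij} \bigl((b_k)_j x_j\bigr) = \sum_{k=1}^{N} (D_k x)^T A (D_k x).
\]
Since $A$ is positive definite, each summand is $\geq 0$, and the $k$-th summand vanishes if and only if $D_k x = 0$.

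To conclude that the total is strictly positive whenever $x \neq 0$, I fix an index $i$ with $x_i \neq 0$. The diagonal assumption $B_{ii} > 0$ translates via the decomposition into $\sum_k (b_k)_i^2 = B_{ii} > 0$, hence there exists some $k$ with $(b_k)_i \neq 0$. For this $k$, the $i$-th coordinate of $D_k x$ is $(b_k)_i x_i \neq 0$, so $D_k x \neq 0$ and the corresponding term $(D_k x)^T A (D_k x)$ is strictly positive. The remaining terms are nonnegative, so the whole sum is strictly positive, proving $A \odot B$ is positive definite.

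There is no serious obstacle here: the result is essentially Schur's theorem together with a bookkeeping observation. The only mild subtlety is remembering to use the decomposition of $B$ (rather than of $A$) so that the diagonal positivity hypothesis can be invoked to certify that some $D_k x$ is nonzero for any nonzero $x$.
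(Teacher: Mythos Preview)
Your proof is correct. It is the standard refinement of Schur's product theorem: decompose the positive semidefinite factor, push the quadratic form through, and use the strictly positive diagonal to guarantee that at least one summand is strictly positive for any nonzero $x$.

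The paper, however, takes a genuinely different route. Rather than working with the quadratic form directly, it verifies positive definiteness via Sylvester's criterion, showing that every leading principal minor $\det((A\odot B)_{(k)})$ is positive. For small $k$ (before the first vanishing of $\det(B_{(k)})$) it invokes Schur's product theorem on the positive definite blocks $A_{(k)}$ and $B_{(k)}$. For larger $k$ it appeals to a determinantal inequality of Lynn,
\[
\det(A_{(k)} \odot B_{(k)}) + \det(A_{(k)}B_{(k)}) \geq \det(A_{(k)}) \textprod_{i} b_{ii} + \det(B_{(k)}) \textprod_{i} a_{ii},
\]
and uses $\det(B_{(k)}) = 0$ to conclude $\det((A\odot B)_{(k)}) \geq \det(A_{(k)})\prod_i b_{ii} > 0$. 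Your argument is more elementary and entirely self-contained; the paper's argument trades a direct computation for an appeal to an external inequality but stays at the level of determinants, which is perhaps in keeping with the Lindstr\"om--Gessel--Viennot theme of the surrounding section.
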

\begin{proof}
Given a symmetric matrix $C$, denote by $C_{(k)}$ the $k$-th leading principal submatrix of $C$, namely the $k\times k$ submatrix consisting of the first $k$ rows and $k$ columns of $C$. Then $C$ is positive definite if and only if $\det(C_{(k)})$ is positive for every $k$. Therefore it suffices to show that the leading principal minors of $A \odot B$ are positive.

Let $k_0$ be the smallest $k$ such that $\det(B_{(k)}) = 0$. From Schur's Product Theorem (see, e.g. \cite[Ex.~36.2.1]{MR1277174}), the Hadamard product of positive definite matrices is positive definite. For every $k < k_0$, we have that $B_{(k)}$ is positive definite, so $(A\odot B)_{(k)} = A_{(k)} \odot B_{(k)}$ is positive definite as well and in particular it has positive determinant.

If $k \geq k_0$, from \cite[Eqn.~1.11]{Lynn:schurProductInequalities}, we have 
 \[
  \det(A_{(k)} \odot B_{(k)}) + \det(A_{(k)}B_{(k)}) \geq \det(A_{(k)}) \textprod_{i=1}^{k} b_{ii} + \det(B_{(k)}) \textprod_{i=1}^{k} a_{ii}.
 \]
Now, for $k \geq k_0$, $B_{(k)}$ is a positive semidefinite matrix and $B_{(k_0)}$ is a principal submatrix with $\det( B_{(k_0)})= 0$, so $\det( B_{(k)}) = 0$ as well. Therefore
 \[
 \det((A \odot B)_{(k)} ) \geq \det(A_{(k)}) \textprod_{i=1}^k b_{ii} > 0
\]
proving that $(A \odot B)_{(k)} $ has positive determinant.  
\end{proof}

\begin{proposition}\label{prop: hadamard product of many gessel viennot}
Let $\bfa_1 \vvirg \bfa_N \in \bbZ^m_{\geq 0}$. Then the rank of $G(\bfa_1 \vvirg \bfa_N)$ equals the number of distinct $m$-tuples $\bfa_1 \vvirg \bfa_N$. 
\end{proposition}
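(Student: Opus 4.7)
The plan is to prove that $\rank G(\bfa_1, \ldots, \bfa_N)$ equals the number $t$ of distinct tuples by matching upper and lower bounds. For the upper bound, I observe that if $\bfa_i = \bfa_j$ for some $i \neq j$, then for each coordinate $k$ the $i$-th and $j$-th rows of the one-dimensional factor $G(a_{1,k}, \ldots, a_{N,k})$ coincide; consequently the same rows of the Hadamard product $G(\bfa_1, \ldots, \bfa_N)$ coincide, so $\rank G \leq t$. Selecting one representative $\bfb_u$ from each equivalence class of coincident tuples, the corresponding $t \times t$ principal submatrix equals $G(\bfb_1, \ldots, \bfb_t)$, so it suffices to prove $\rank G(\bfa_1, \ldots, \bfa_N) = N$ whenever the tuples $\bfa_i$ are all distinct.

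The key tool will be the Chu--Vandermonde identity $\binom{a+b}{a} = \sum_{\ell \geq 0}\binom{a}{\ell}\binom{b}{\ell}$, which realises each one-dimensional Gessel--Viennot factor as a Gram matrix. Expanding the product coordinate by coordinate yields
\[
G(\bfa_1, \ldots, \bfa_N)_{ij} = \sum_{(\ell_1, \ldots, \ell_m) \in \bbZ^m_{\geq 0}} w_i(\ell_1, \ldots, \ell_m)\, w_j(\ell_1, \ldots, \ell_m), \qquad w_i(\ell_1, \ldots, \ell_m) := \prod_{k=1}^m \binom{a_{i,k}}{\ell_k},
\]
so that $G$ is the Gram matrix of the (infinite-dimensional) vectors $w_1, \ldots, w_N$ and $\rank G = \dim \langle w_1, \ldots, w_N \rangle$. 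It therefore suffices to show that $w_1, \ldots, w_N$ are linearly independent when the $\bfa_i$ are distinct.

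For this I will form the auxiliary $N \times N$ matrix obtained by evaluating $w_i$ at the index $(\ell_1, \ldots, \ell_m) = \bfa_j$, namely $M_{ij} = \prod_k \binom{a_{i,k}}{a_{j,k}}$. Ordering the tuples by a linear extension of the componentwise partial order on $\bbZ^m_{\geq 0}$ with smaller tuples listed first, one has $M_{ii} = 1$, while for $i < j$ the tuple $\bfa_j$ is not $\leq \bfa_i$ componentwise (either $\bfa_i < \bfa_j$ or they are incomparable, and in both cases some coordinate of $\bfa_j$ strictly exceeds that of $\bfa_i$), forcing $\binom{a_{i,k}}{a_{j,k}} = 0$ for such $k$ and hence $M_{ij} = 0$. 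Thus $M$ is lower triangular with unit diagonal, therefore nonsingular, which forces the linear independence of the $w_i$ and yields $\rank G = N$. The most delicate step will be this triangularity check: one must fix a linear extension of the partial order, and one must verify that incomparability (not just strict inequality) still produces the required vanishing entry. Once that is handled, the argument only uses the Gram-matrix principle and the Chu--Vandermonde identity, bypassing any induction on $m$ via the Hadamard-product lemmas proved above.
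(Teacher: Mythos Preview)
Your proof is correct and takes a genuinely different route from the paper. The paper argues by induction on $m$: writing $G(\bfa_1,\ldots,\bfa_N)=A\odot B$ with $A=G(\bfa_1',\ldots,\bfa_N')$ and $B=G(a_{1,m},\ldots,a_{N,m})$, it invokes Lemma~\ref{lemma: principal minor on indep rows} and Lemma~\ref{lemma: posdef odot positive psd is pos def} (ultimately resting on Schur's product theorem and an inequality of Lynn) inside a contradiction argument that iteratively reduces to a full-rank principal submatrix. Your approach instead realises $G$ directly as a Gram matrix via Chu--Vandermonde, $\binom{a+b}{a}=\sum_{\ell}\binom{a}{\ell}\binom{b}{\ell}$, so that $G_{ij}=\langle w_i,w_j\rangle$ for the finitely supported vectors $w_i(\boldsymbol{\ell})=\prod_k\binom{a_{i,k}}{\ell_k}$, and then proves linear independence of the $w_i$ by evaluating at the points $\bfa_j$ themselves: the resulting matrix $M_{ij}=\prod_k\binom{a_{i,k}}{a_{j,k}}$ is unit lower triangular after ordering by any linear extension of the componentwise partial order, since $i<j$ forces $\bfa_j\not\leq\bfa_i$ and hence some $\binom{a_{i,k}}{a_{j,k}}=0$.

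Your argument is more elementary and self-contained: it needs neither the Hadamard-product lemmas nor the Lindstr\"om--Gessel--Viennot total nonnegativity input, and it avoids the somewhat delicate iterated contradiction in the paper. The paper's approach, on the other hand, stays within the positive-semidefinite framework throughout and makes the role of the Hadamard factorisation explicit, which is natural given how $G(\bfa_1,\ldots,\bfa_N)$ is defined; it also yields positive definiteness (not merely full rank) along the way. Both give the same conclusion, but your Gram-plus-triangularity argument is shorter and would make Lemmas~\ref{lemma: principal minor on indep rows} and~\ref{lemma: posdef odot positive psd is pos def} unnecessary for the purposes of Theorem~\ref{thm: hnd have maximal catalecticant rank}.
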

\begin{proof}
We proceed by induction on $m$. For $m=1$, the statement follows from the Lindstr\"om-Gessel-Viennot Lemma.

If $m\geq 2$, for every $i=1 \vvirg N$, write $\bfa_i = (\bfa_i',a_{i,m})$. Let $A = G(\bfa_1' \vvirg \bfa_N')$ and $B= G(a_{m,1} \vvirg a_{m,N})$. By the induction hypothesis $A$ is positive semidefinite and its rank is equal to the number of distinct $\bfa_i'$'s. Similarly for $B$.

Let $C =  G(\bfa_1 \vvirg \bfa_N) = A \odot B$. If two pairs $\bfa_i = \bfa_j$, then $A^\bullet_i = A^\bullet_j$ and $B^\bullet_i = B^\bullet_j$ so that the corresponding two columns of $C$ are equal. Conversely, if two columns $C^\bullet_i,C^\bullet_j$ are equal, we show that the corresponding $m$-tuples $\bfa_i$ and $\bfa_j$ are equal. Consider the principal $2\times2$ submatrix obtained from these two columns:
\[
 C^{ij}_{ij} = A^{ij}_{ij} \odot B^{ij}_{ij}.
\]
If $A^\bullet_i \neq A^\bullet_j$, then they are linearly independent by the induction hypothesis, and by Lemma \ref{lemma: principal minor on indep rows} the submatrix $A^{ij}_{ij}$ is positive definite. The submatrix $B^{ij}_{ij}$ is positive semidefinite and has strictly positive diagonal entries, therefore by Lemma \ref{lemma: posdef odot positive psd is pos def} $C^{ij}_{ij}$ is positive definite, in contradiction with the assumption. This shows that if $C^\bullet_i = C^\bullet_j$, then $A^\bullet_i = A^\bullet_j$ and therefore $B^\bullet_i = B^\bullet_j$. In particular $\bfa_i = \bfa_j$.

Therefore, we may assume that $C$ has distinct columns and our goal is to show that $C$ has full rank. Suppose by contradiction that $C$ does not have full rank and let
\begin{equation}\label{eqn: vanishing linear combination of C}
0 = \alpha_1  C^\bullet_1 + \cdots + \alpha_N C^\bullet_N
\end{equation}
be a vanishing linear combination of the columns of $C$.

Up to conjugation by a permutation matrix, suppose there exist $0 = k_0 < k_1 < \cdots < k_r=N$ such that $\bfa_i' = \bfa_j'$ if $k_s < i,j \leq k_{s+1}$ and $\bfa_i' \neq \bfa_j'$ otherwise. Notice that if the $\bfa_i$'s are distinct, then $A$ has full rank and so does $C$ from Lemma \ref{lemma: posdef odot positive psd is pos def} because $B$ is positive semidefinite with strictly positive entries. Therefore, suppose $k_1 \geq 2$ and up to reducing to a principal submatrix suppose that $\alpha_{k_1} \neq 0$. Since the first $k_1$ columns (and rows) of $A$ are equal, the first $k_1$ columns (and rows) of $B$ are linearly independent, otherwise two of them would be equal, providing that two $m$-tuples $\bfa_i$ and $\bfa_j$ for $i,j \leq k_1$ would be equal. By Lemma \ref{lemma: principal minor on indep rows}, the principal submatrix $B^{1\vvirg k_1}_{1 \vvirg k_1}$ is positive definite.

The linear combination \eqref{eqn: vanishing linear combination of C} can be written as 
\[
 0 = A^\bullet_1 \odot(\alpha_1 B^\bullet_1 + \cdots + \alpha_{k_1} B^\bullet_{k_1}) + \textsum_{i > k_1} \alpha_i (A^\bullet_i \odot B^\bullet_i).
\]

Define $\tilde{A}$ to be the matrix obtained from $A$ by removing the first $k_1-1$ rows and columns, that is $\tilde{A} = G(\bfa_{k_1}' \vvirg \bfa_N')$. $\tilde{A}$ has the same rank as $A$. Let $B' = P^T B P$, for
\[
 P = \left( \begin{array}{ccccccc}
              1 & & & \alpha_1& & & \\
              & \ddots & & \vdots & & & \\
              & & 1 & \vdots & & & \\
              & & & \alpha_k& & & \\
              & & & & 1 & & \\
              & & & & & \ddots & \\
              & & & & & & 1\\
            \end{array}
 \right);
\]
notice that $B'$ is obtained from $B$ by performing row and column operations. In particular $B'$ has the same signature as $B$; moreover, from the block structure of $P$, we deduce that the submatrix ${B'}^{1 \vvirg k_1}_ { 1\vvirg k_1}$ has the same signature as $B^{1 \vvirg k_1}_ { 1\vvirg k_1}$, namely it is positive definite. This shows that the $k_1$-th diagonal entry of $B'$ is strictly positive. Define $\tilde{B}$ to be the submatrix obtained from $B'$ by removing the first $k_1-1$ rows and columns. Define $\tilde{C} = \tilde{A} \odot \tilde{B}$. The linear combination of \eqref{eqn: vanishing linear combination of C} induces a vanishing linear combination among the columns of $\tilde{C}$. 

By repeating this procedure at most $r$ times, we find a singular $r\times r$ matrix $\tilde{\tilde{C}} =  \tilde{\tilde{A}} \odot \tilde{\tilde{B}}$ with $\tilde{\tilde{A}}$ positive semidefinite and of full rank (so positive definite) and $\tilde{\tilde{B}}$ positive semidefinite with strictly positive diagonal entries. By Lemma \ref{lemma: posdef odot positive psd is pos def}, we obtain a contradiction. This concludes the proof.
\end{proof}

Using these results, we can finally prove:
\begin{theorem}\label{thm: hnd have maximal catalecticant rank}
 For every $n,d,e$, the flattening $(h_{n,d})_{e,d-e} : S^e (\bbC^n)^* \to S^{d-e} \bbC^n$ of the complete symmetric function $h_{n,d}$ has full rank.
\end{theorem}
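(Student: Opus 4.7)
My plan is to reduce the statement to the Gessel--Viennot framework developed in Proposition \ref{prop: hadamard product of many gessel viennot}. Using the rank equality $\rank(p_{e,d-e}) = \rank(p_{d-e,e})$ together with the fact (noted in the introduction) that full rank at $(e, d-e)$ propagates to full rank at $(e', d-e')$ for $e' \leq e \leq d/2$, it suffices to prove full rank in the single case $e = \lfloor d/2 \rfloor$. By Proposition \ref{prop: derivatives of h} and Proposition \ref{prop: coefficients of monomials is expanded h}, the matrix of $(h_{n,d})_{e,d-e}$ in the monomial bases has entries, up to nonzero column scalars $\beta!$, equal to $M_{\gamma,\beta} = \prod_{i=1}^n \binom{\beta_i + \gamma_i}{\gamma_i}$, indexed by $|\gamma| = d-e$ (rows) and $|\beta| = e$ (columns).

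For even $d = 2e$, the matrix $M$ is square and, after using $\binom{\beta_i + \gamma_i}{\gamma_i} = \binom{\beta_i + \gamma_i}{\beta_i}$, coincides with the Gessel--Viennot matrix $G(\beta^{(1)}, \ldots, \beta^{(N)})$ indexed by the $N = \binom{n+e-1}{e}$ distinct multi-indices of degree $e$; Proposition \ref{prop: hadamard product of many gessel viennot} then yields full rank directly.

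For odd $d = 2e+1$, the matrix $M$ is rectangular, and I would extract a square submatrix $M'$ by restricting to rows of the form $\gamma = \beta + e_n$ for $|\beta|=e$ (where $e_n$ denotes the multi-index with a $1$ in position $n$). Applying the binomial symmetry $\binom{a}{b} = \binom{a}{a-b}$ to the $n$-th coordinate, the entries of $M'$ become $\binom{\beta_n + 1 + \beta'_n}{\beta'_n} \prod_{i < n} \binom{\beta_i + \beta'_i}{\beta_i}$, exhibiting $M'$ as an off-diagonal block of the larger symmetric Gessel--Viennot matrix indexed by the disjoint union $\{|\beta'|=e\} \sqcup \{|\tilde\beta|=e+1,\ \tilde\beta_n \geq 1\}$ of distinct multi-indices. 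That larger matrix is positive definite by Proposition \ref{prop: hadamard product of many gessel viennot}, but positive definiteness alone does not force invertibility of an off-diagonal block, so this is the main obstacle.

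To overcome this obstacle, I plan to mimic the induction in the proof of Proposition \ref{prop: hadamard product of many gessel viennot}, replacing the symmetric one-coordinate Gessel--Viennot factor by its shifted analog. The key auxiliary fact is that the one-coordinate shifted matrix $\left[\binom{a+c+b}{b}\right]_{a,b \in \{0,\ldots,e\}}$ has determinant $1$ for every $c \geq 0$; this is verified by induction on $e$ using Pascal's identity $\binom{(a+1)+c+b}{b} - \binom{a+c+b}{b} = \binom{a+c+b}{b-1}$ to perform a single row-reduction step that strips off a leading $1$ and yields the same form with $(c, e)$ replaced by $(c+1, e-1)$. Since the shifted one-coordinate matrix is no longer symmetric, Lemma \ref{lemma: posdef odot positive psd is pos def} must be replaced by a more careful factorization argument exploiting this unimodularity; alternatively, one may apply the Lindström--Gessel--Viennot lemma directly to $M'$, interpreting $\det M'$ as a positive count of non-intersecting $n$-tuples of lattice paths with the canonical identity matching between the shifted sources and the sinks.
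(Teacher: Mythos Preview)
Your even-degree argument is exactly the paper's: the middle catalecticant matrix is, up to the harmless column scalars $\beta!$, the Gessel--Viennot matrix $G(\beta:|\beta|=e)$, and Proposition \ref{prop: hadamard product of many gessel viennot} finishes.

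For odd $d=2e+1$ your route diverges from the paper's and, as written, has a genuine gap. You correctly isolate the obstacle: after selecting the rows $\gamma=\beta'+e_n$, the $n$-th Hadamard factor becomes the \emph{non-symmetric} shifted block $\bigl[\binom{\beta_n+\beta'_n+1}{\beta'_n+1}\bigr]$ (note: your displayed entry $\binom{\beta_n+1+\beta'_n}{\beta'_n}$ is off by one in the lower index), so Lemmas \ref{lemma: principal minor on indep rows} and \ref{lemma: posdef odot positive psd is pos def} no longer apply. Your two proposed fixes are both only sketched. The unimodularity of the one-variable shifted matrix $\bigl[\binom{a+c+b}{b}\bigr]_{0\le a,b\le e}$ is true and your Pascal reduction proves it, but that fact alone does not let you rerun the induction of Proposition \ref{prop: hadamard product of many gessel viennot}: that induction relies essentially on positive semidefiniteness (via Lemma \ref{lemma: principal minor on indep rows}) to pass from ``distinct columns'' to ``linearly independent columns'' at each step, and a non-symmetric unimodular factor gives you no such control over principal submatrices of the Hadamard product. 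Likewise, the direct LGV route would need you to set up sources and sinks in $\bbZ^n$ (or in a layered DAG) so that the unique compatible matching is the identity; for multi-index labels this ``planarity'' verification is the whole content of the argument, and you have not supplied it.

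The paper sidesteps all of this by reducing odd degree to the already-proved even case. It takes a single derivative $g:=\tfrac{\partial}{\partial x_1}h_{n,2k+1}=h_{n+1,2k}(x_1,x_1,x_2,\ldots,x_n)$, so that $\tim(g_{k,k})\subseteq\tim\bigl((h_{n,d})_{k+1,k}\bigr)$, and then observes that the partials of $h_{n+1,2k}$ (which span $S^k\bbC^{n+1}$ by the even case) specialize under $y_{n+1}\mapsto x_1$ exactly to the partials of $g$; hence $g_{k,k}$ already surjects onto $S^k\bbC^n$. This is shorter and avoids any new positivity machinery. If you want to salvage your direct approach, the cleanest path is probably to carry out the multi-dimensional LGV count in full, but you should expect that to be at least as long as the paper's specialization trick.
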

\begin{proof}
First, we consider the case $d = 2k$ even.

It suffices to prove the result for $e = k$.
\begin{align*}
(h_{n,d})_{k,k} ( S^k (\bbC^n)^*) &= \left\langle \frac{\partial^{\vert \beta \vert}}{\partial \bfx^\beta}h_{n,d} : \vert \beta \vert =k\right\rangle = \\
  &= \left\langle h_{n+k,k}(x_1 \vvirg x_n, \bfx^{(\beta)}) : \vert \beta \vert = k \right\rangle.
\end{align*}
Define two column vectors
\begin{align*}
\bfh &= \bigl(h_{n+k,k}(x_1 \vvirg x_n, \bfx^{(\beta)}) :  \vert \beta \vert = k \bigr)^T \\
\bfb &= \bigl(\bfx^\beta : \vert \beta \vert = k\bigr)^T.
\end{align*}
From Proposition \ref{prop: derivatives of h} and Proposition \ref{prop: coefficients of monomials is expanded h}, we have $\bfh = A \bfb$ where the $(\beta,\gamma)$-th entry of $A$ is 
\[
A_{\beta,\gamma} = \textprod_{i=1}^n\binom{\beta_i+\gamma_i}{\gamma_i}
\]
namely $A = G(\beta : \vert \beta \vert = k)$. Since the multi-indices $\beta$ are all distinct, by applying Proposition \ref{prop: hadamard product of many gessel viennot} we deduce that $A$ is nonsingular and therefore the entries of $\bfb$ are linear combinations of the entries of $\bfh$. This shows that $(h_{n,d})_{k,k}$ is full rank.

Now consider $d = 2k+1$ odd. It suffices to prove the result for $e = k+1$. Let $g = \frac{\partial}{\partial x_1}h_{n,d} = h_{n+1,d-1}(x_1,x_1,x_2 \vvirg x_n) \in S^{2k}\bbC^n$. The image of the flattening $g_{k,k} : S^k (\bbC^n)^* \to S^k \bbC^n$ is contained in the image of  $(h_{n,d})_{k+1,k}$.  
To conclude, we will show  that $g_{k,k}$ is full rank.

Let $h = h_{n+1,d-1}(y_1 \vvirg y_{n+1})$. By the result in the case of even degree, we know that $(h)_{k,k}$ has full rank, namely 
\[
\left\langle h_{n+1+k,k}(y_1 \vvirg y_{n+1}, \bfy^{(\beta)}) : \vert \beta \vert = k\right\rangle = S^k \bbC^{n+1}.
\]
In particular, the image of this space under the specialization $(y_1 \vvirg y_{n+1}) = (x_1,\vvirg x_n,x_1)$ is $S^k \bbC^n$.

On the other hand, notice that,
\[
 h_{n+1+k,k}(y_1 \vvirg y_{n+1}, \bfy^{(\beta)})\vert_{(y_1 \vvirg y_{n+1}) = (x_1 \vvirg x_n,x_1)} = h_{n+k+1,k}(x_1 \vvirg x_n, \bfx^{(\gamma)})
\]
where $\gamma_1 = \beta_1 + \beta_{n+1} + 1$ and $\gamma_i = \beta_i$ for $i = 2 \vvirg n$ (indeed $\vert \gamma \vert = k+1$). This shows that the image of $(h)_{k,k}$ is $S^k \bbC^n$, and therefore $(h_{n,d})_{k+1,k}$ is full rank.
\end{proof}
 
Theorem \ref{thm: hnd have maximal catalecticant rank} implies
\begin{corollary}
For every $n,d$,  
 \[
\bfR_S(h_{n,d}) \geq  \uline{\bfR}_S(h_{n,d}) \geq \binom{n+\lfloor d/2 \rfloor -1}{\lfloor d/2 \rfloor}.
 \]
\end{corollary}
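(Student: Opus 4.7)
The plan is to combine the preceding theorem with two standard general facts about symmetric border rank and flattenings, both of which are recalled explicitly in the introduction of the paper.

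First I would note the trivial inequality $\bfR_S(h_{n,d}) \geq \ur_S(h_{n,d})$, which holds for any polynomial since the set of polynomials of Waring rank at most $r$ contains, by definition, its own limits only after passing to border rank; any polynomial is a (constant) limit of itself, so rank bounds border rank from above.

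Next I would invoke the semi-continuity-of-rank principle stated in the introduction: for any $p \in S^d \bbC^n$ and any $e$, one has $\ur_S(p) \geq \rank(p_{e,d-e})$. Specializing to $e = \lfloor d/2 \rfloor$, Theorem~\ref{thm: hnd have maximal catalecticant rank} guarantees that $(h_{n,d})_{e,d-e}$ has full rank, i.e.\ its rank equals $\min\{\dim S^e \bbC^n, \dim S^{d-e}\bbC^n\}$. Since $e = \lfloor d/2 \rfloor \leq d-e$, this minimum is attained by the source, and equals
\[
\dim S^{\lfloor d/2 \rfloor} \bbC^n = \binom{n + \lfloor d/2 \rfloor - 1}{\lfloor d/2 \rfloor}.
\]
Chaining these three inequalities yields the stated bound.

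There is essentially no obstacle: the content of the corollary is entirely contained in Theorem~\ref{thm: hnd have maximal catalecticant rank}, and the corollary itself is a two-line deduction assembling the border rank/flattening framework already laid out in the introduction. I would write it as a brief proof that explicitly identifies $e = \lfloor d/2 \rfloor$ as the optimal choice, records the dimension of the source, and cites Theorem~\ref{thm: hnd have maximal catalecticant rank} for the surjectivity (equivalently, full rank) of the middle flattening.
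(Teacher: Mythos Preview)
Your proposal is correct and matches the paper's approach exactly: the paper does not even write out a proof, it simply says ``Theorem~\ref{thm: hnd have maximal catalecticant rank} implies'' before stating the corollary, relying on the flattening bound $\ur_S(p)\geq \rank(p_{e,d-e})$ from the introduction. One small wording slip: at $e=\lfloor d/2\rfloor$ the full-rank flattening is \emph{injective} (its rank equals the dimension of the source), not surjective as you write in your last sentence; your computation itself is consistent with injectivity.
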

For readers familiar with cactus rank and cactus border rank, by \cite[Thm. 5.3D]{MR1735271}, we obtain the same lower bounds for cactus rank and cactus border rank. 
 
\section{Koszul flattenings}
We recall the definition of Koszul flattening introduced in \cite{MR3081636}. A Koszul flattening map of $p \in S^d \bbC^N$ depends on two parameters $s,q$: it is obtained by tensoring the $s$-th partial derivative map \eqref{flateqn} with the identity map on the space $\Lambda^q \BC^N$ (for some $q$) and then applying the exterior derivative map. In symbols:
\begin{align*}
\begin{array}{clclc}
 \Lambda^q \bbC^N \otimes S^{s} (\bbC^N)^* &\xto{\id \otimes p_{s,d-s}} &\Lambda^q \bbC^N \otimes S^{d-s} \bbC^N & \xto{\delta_{q,d-s}}&\Lambda^{q+1} \bbC^N \otimes S^{d-s-1} \bbC^N \\
 X \otimes D &\xmapsto{\phantom{\id \otimes p_{s,d-s}}} & X \otimes D(p) & &  \\
 & &  X \otimes g & \xmapsto{\phantom{\delta_{q,d-s}}} &\textsum_1^N \left((x_i \wedge X) \otimes \textfrac{\partial g}{\partial x_i}\right)
\end{array}
 \end{align*}

Let $p_{s,d-s}^{\wedge q}:\Lambda^q \BC^N \otimes S^s\BC^{N*} \to \Lambda^{q+1}\BC^N\otimes S^{d-s-1}\BC^N$ denote the composition of the two maps. Explicitly, writing $x_I=x_{i_1}\wedge \cdots \wedge x_{i_{q}}$ for a $q$-tuple $I = (i_1 \vvirg i_{\vert I\vert})$ and $\frac{\partial^{s}  }{\partial x_{J}}=\frac{\partial  }{\partial x_{j_1}}\cdots \frac{\partial  }{\partial x_{j_s}}$ for an $s$-tuple $J = (j_1 \vvirg j_s)$, the map is
\[
p_{s,d-s}^{\wedge q}:  x_{I}\otimes \frac{\partial^{s}  }{\partial x_{J}}
\mapsto x_k\wedge x_I\otimes \frac{\partial^{s+1} p }{\partial x_k \partial x_{J}}.
\]

Then, by \cite[Prop 4.1.1]{MR3081636}, one obtains the border rank lower bound
\begin{equation}\label{eqn: Koszul flat bound for border rank}
\uline{\bfR}_S(p)\geq \frac{\rank(p_{s,d-s}^{\wedge q})}{\rank((\ell^d)_{s,d-s}^{\wedge q})}= \frac{\rank(p_{s,d-s}^{\wedge q})}
{\binom{N-1}{q}}.
\end{equation}

When $p_{s,d-s}$ is of maximal rank for all $s$, Koszul flattenings can only give a better Waring border rank lower bound than the partial derivative maps when $d=2k+1$ is odd and $s=k$. For example, if $d=2k$ is even, then the rank of the Koszul flattening is bounded above by $\dim \left( \Lambda^q \bbC^N  \otimes S^{k-1}\bbC^N \right)$ so the   Waring border rank lower bound from Koszul flattenings is bounded above by $\binom{N+k-2}{k-1}\frac N{N-q}$ whereas from flattenings alone, one already gets the larger lower bound of $\binom{N+k-1}{k}$. Similarly, if $d$ is odd but $N \neq 2q+1$, the bound from Koszul flattenings is lower than the one from standard flattenings. 

Suppose $d=2k+1$ and  $p_{k,k+1}$ is of maximal rank. Viewed na\"\i vely, one might think Koszul flattenings could prove Waring border rank lower bounds of up to
\[
\frac{\dim ( \Lambda^q \BC^N \otimes S^k\BC^{N*})}{\binom{N-1}{q}}=\binom{N+k-1}k \frac N{N-q}
\]
but this is not possible because the exterior derivative map is a $GL_N$-module map  that is not surjective unless $q = N,N-1$. Indeed, we have
the decompositions
\begin{align*} 
\Lambda^q \BC^N \otimes S^{k+1} \BC^N &= S_{k+1,1^q}\BC^N\oplus S_{k+2,1^{q-1}}\BC^N \\
\Lambda^{q+1} \BC^N\otimes S^{k } \BC^N &= S_{k ,1^{q+1}}\BC^N\oplus S_{k+1,1^{q }}\BC^N
\end{align*}
so, by Schur's Lemma, $S_{k+2,1^{q-1}}\BC^N =  \ker (\delta)$ and   $\tim (\delta) = S_{k+1,1^{q }}\BC^N$.
 
The following result gives an \emph{a priori} upper bound for the rank of the Koszul flattening, by determining a lower bound for the dimension of $\ker(p_{k,k+1}^{\wedge q})$.

\begin{proposition}\label{kozapriori} Let $p\in S^{2k+1}\BC^N$. Then for every $q$

\begin{equation}\label{eqn: a priori estimate for Koszul}
\rank(p^{\wedge q}_{k,k+1})\leq \sum_{j=0}^k (-1)^j\dim ( \La{q-j}\BC^N \otimes S^{k-j}\BC^{N*}).
\end{equation}
\end{proposition}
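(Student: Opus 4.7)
The plan is to exhibit an explicit subspace of $\ker(p^{\wedge q}_{k,k+1})$ whose dimension, computed via a Koszul complex on the domain side, realizes the alternating sum. Write $V = \BC^N$, pick a basis $x_1 \vvirg x_N$ with dual basis $\partial_1 \vvirg \partial_N \in V^* = \BC^{N*}$, and define
\[
\beta \colon \Lambda^{q-1} V \otimes S^{k-1} V^* \to \Lambda^q V \otimes S^k V^*, \qquad Y \otimes E \mapsto \sum_{j=1}^N (x_j \wedge Y) \otimes (\partial_j \cdot E),
\]
where $\partial_j \cdot E$ denotes multiplication in $S(V^*)$. Expanding $p^{\wedge q}_{k,k+1} \circ \beta$ produces $\sum_{i,j} (x_i \wedge x_j \wedge Y) \otimes (\partial_i \partial_j E)(p)$, which vanishes because $x_i \wedge x_j$ is antisymmetric in $(i,j)$ while $\partial_i \partial_j$ is symmetric; the same cancellation gives $\beta^2 = 0$. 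Iterating $\beta$ therefore yields a complex
\[
0 \to \Lambda^{q-m} V \otimes S^{k-m} V^* \xrightarrow{\beta} \cdots \xrightarrow{\beta} \Lambda^{q-1} V \otimes S^{k-1} V^* \xrightarrow{\beta} \Lambda^q V \otimes S^k V^* \xrightarrow{p^{\wedge q}_{k,k+1}} \Lambda^{q+1} V \otimes S^k V
\]
with $m = \min(q, k)$, and in particular $\tim(\beta) \subseteq \ker(p^{\wedge q}_{k,k+1})$.

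The key step is to show the $\beta$-complex is exact at each of its internal terms, which I would do with an explicit contracting homotopy
\[
h \colon \Lambda^q V \otimes S^k V^* \to \Lambda^{q-1} V \otimes S^{k-1} V^*, \qquad X \otimes D \mapsto \sum_{l=1}^N \iota_{\partial_l}(X) \otimes \iota_{x_l}(D),
\]
where $\iota_{\partial_l}$ denotes contraction on $\Lambda^\bullet V$ against $\partial_l \in V^*$, and $\iota_{x_l}$ is the derivation on $S^\bullet V^*$ characterized by $\iota_{x_l}(\partial_j) = \delta_{lj}$. Expanding $h\beta(X \otimes D)$ via the Leibniz rules for both $\iota_{\partial_l}$ and $\iota_{x_l}$ produces four summands; three simplify using the Euler identities $\sum_l x_l \wedge \iota_{\partial_l}(X) = q X$ on $\Lambda^q V$ and $\sum_l \partial_l \iota_{x_l}(D) = k D$ on $S^k V^*$, contributing $(N+k-q) X \otimes D$ in total, while the fourth is precisely $-\beta h(X \otimes D)$. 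Thus
\[
h\beta + \beta h = (N + k - q) \cdot \id
\]
on $\Lambda^q V \otimes S^k V^*$. Since $q \leq N$ (otherwise $\Lambda^q V = 0$) and $k \geq 0$, the scalar $N + k - q$ is nonzero except in the degenerate case $(q, k) = (N, 0)$, where $p^{\wedge N}_{0,1}$ targets the zero space $\Lambda^{N+1} V \otimes S^0 V = 0$ and the inequality is automatic. In all other cases $h/(N + k - q)$ is a genuine contracting homotopy, so the $\beta$-complex is exact.

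By exactness, the relation $\ker(\beta_j) = \tim(\beta_{j+1})$ can be unwound to give $\dim \tim(\beta) = \sum_{j=1}^m (-1)^{j-1} \dim(\Lambda^{q-j} V \otimes S^{k-j} V^*)$. Combining with the inclusion $\tim(\beta) \subseteq \ker(p^{\wedge q}_{k,k+1})$ then yields
\[
\rank(p^{\wedge q}_{k,k+1}) \leq \dim(\Lambda^q V \otimes S^k V^*) - \dim \tim(\beta) = \sum_{j=0}^m (-1)^j \dim(\Lambda^{q-j} V \otimes S^{k-j} V^*),
\]
which is the claimed bound (extending the upper index from $m$ to $k$ is harmless, as the extra terms with $q - j < 0$ vanish). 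The main obstacle is verifying the homotopy identity $h\beta + \beta h = (N+k-q)\id$: this is a somewhat fiddly but entirely mechanical bookkeeping calculation, and once it is in hand the remainder of the argument is purely formal.
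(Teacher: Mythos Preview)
Your proof is correct and takes a genuinely different route from the paper's. The paper works on the target side: it shows that $\tim(p^{\wedge(q-1)}_{k-1,k+2})$ lies inside both $\tim(\Id_{\Lambda^q\bbC^N}\otimes p_{k,k+1})$ and $\ker(\delta_{q,k+1})$ (the latter from $\delta^2=0$), deduces the recursion $\rank(p^{\wedge q}_{k,k+1}) \leq \dim(\Lambda^q\bbC^N\otimes S^k\bbC^{N*}) - \rank(p^{\wedge(q-1)}_{k-1,k+2})$, and then concludes by induction. You instead work on the source side, building a $p$-independent Koszul complex with differential $\beta$ terminating at $\Lambda^q\bbC^N\otimes S^k\bbC^{N*}$ and proving it exact via the explicit homotopy $h$; the polynomial $p$ enters only through the single containment $\tim(\beta)\subseteq\ker(p^{\wedge q}_{k,k+1})$. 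Your homotopy identity $h\beta+\beta h=(N+k-q)\id$ is correct (it is the standard Euler-operator calculation for this complex), and the edge case $(q,k)=(N,0)$ is handled properly. What your approach buys is that $\dim\tim(\beta)$ is computed \emph{exactly} as the alternating sum, so the bound appears as a closed-form codimension rather than as the endpoint of a chain of inequalities; in particular, your exactness statement is precisely what is needed to make the telescoping rigorous, since inequalities of the form $a_j+a_{j+1}\leq b_j$ alone do not force $a_0\leq\sum_j(-1)^jb_j$. The paper's approach has the conceptual merit of tying the bound directly to the lower-order Koszul flattenings of $p$ itself.
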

\begin{proof}
We will prove that 
\[
\dim \tim (p^{\wedge(q-1)}_{e-1,d-e+1}) \leq \dim \ker (p^{\wedge q}_{e,d-e}) 
\]
and conclude that the estimate holds for every $q$ via an induction argument. Indeed, $\tim (p^{\wedge (q-1)}_{e-1,d-e+1}) \subseteq \tim (\Id_{\Lambda^{q} \bbC^N} \otimes p_{e,d-e})$ because the exterior derivative map takes derivatives on the factor $S^{d-e+1} \bbC^N$. Moreover, since $\delta^2 = 0$, we have $\tim (p^{\wedge q-1}_{e-1,d-e+1}) \subseteq \ker(\delta_{q,d-e})$. Therefore, passing to dimensions
\[
\dim (\tim (p^{\wedge q-1}_{e-1,d-e+1})) \leq \dim ( \ker(\delta_{q,d-e} \vert_{\tim (\Id_{\Lambda^{q} \bbC^N} \otimes p_{e,d-e})}) \leq \dim \ker (p^{\wedge q}_{e,d-e}).
\]
Now,
\begin{align*} 
\rank(p^{\wedge q}_{k,k+1}) &= \dim  (\La{q}\BC^N \otimes S^{k}\BC^{N*} ) - \dim (\ker (p^{\wedge q}_{k,k+1}))   \\ 
  & \leq  \dim (\La{q}\BC^N \otimes S^{k}\BC^{N*}) - \rank (p^{q-1}_{k-1,k+2}) . 
\end{align*}
and we conclude by induction.
\end{proof}

\begin{remark}\label{rmk: skewsymmetric koszul}
This is still not the end of the story:  when $N=2q+1$ with $q$ odd, then the linear map $p^{\wedge q}_{k,k+1}: \La q \BC^N \otimes  S^k\BC^{N*} \to \La{q+1}\BC^N  \otimes S^{k}\BC^N \simeq  \La{q}\BC^{N*} \otimes S^k\BC^N$ was observed in \cite{MR3081636}, at least in certain cases, to be skew-symmetric. In particular, in this case,  if the bound in \eqref{eqn: a priori estimate for Koszul} is odd, it cannot be attained. 
\end{remark}

\begin{remark}
Since the border rank bound is obtained by dividing $\rank(p^{\wedge q}_{k,k+1})$ by $\binom{N-1}q$, the best potential lower bound is obtained when $N=2q+1$ and there the limit of the method is   twice the bounds obtained via flattenings minus lower order terms. This improvement is irrelevant for complexity. It is known more generally that the improvement in best possible lower bounds beyond the best possible bounds of partial derivatives are limited for any determinantal method. This was observed independently by Efremenko, Garg, Oliveira and Wigderson \cite{EfrGarOliWig:BarriersRankMethods}, and Ga{\l}\k{a}zka \cite{2016arXiv160508005G} for completely different reasons.
\end{remark}

We now show Koszul flattenings can indeed give border rank lower bounds beyond the best lower bound attainable via the method of partial derivatives. 

\begin{proposition}\label{fknkosz}  For all $n > 2$ and all $q < \frac{n}{2}$, the Koszul flattening $(\tilde f_{n,k})_{k,k+1}^{\wedge q}$ has rank at least
\[
\binom{n-1}{q} \Bigl(\binom{n+k-1}{k} +  q-1   \Bigr).
\]
In particular, $\ur_S(\tilde f_{n,k})\geq \binom{n+k-1}{k} +q-1$, which is greater than the lower bound obtainable by flattenings.
\end{proposition}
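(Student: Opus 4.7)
The plan is to reinterpret the Koszul flattening as the Koszul differential restricted to a specific subspace of $\Lambda^q\bbC^n\otimes S^{k+1}\bbC^n$, and then to exploit the very special structure of $V := \tim(p_{k,k+1})$ when $p = \tilde f_{n,k}$. By Theorem~\ref{thm: flattenings of ellqk hit everything}, $p_{k,k+1}$ is injective, so $\dim V = \binom{n+k-1}{k}$, and the identification $S^k\bbC^{n*}\simeq V$ via $p_{k,k+1}$ turns $p^{\wedge q}_{k,k+1}$ into the restriction of the standard Koszul differential $\delta: \Lambda^q\bbC^n\otimes S^{k+1}\bbC^n \to \Lambda^{q+1}\bbC^n\otimes S^k\bbC^n$ to $\Lambda^q\bbC^n\otimes V$.

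The key input I would establish is that for every $i$, the map $\delta_i := \partial_i|_V: V \to S^k\bbC^n$ is an isomorphism. Indeed, $\partial_i p = q_n^k + 2k\,x_i\ell_n q_n^{k-1}$, so $\delta_i(V) = S^k\bbC^{n*}\cdot(\partial_i p)$ already contains $S^k\bbC^{n*}\cdot q_n^k = S^k\bbC^n$ by Theorem~\ref{thm: flattenings of qk hit everything} (case $e=k$); since $\dim V = \dim S^k\bbC^n$, surjectivity forces bijectivity.

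I would then split source and target according to whether the index subset contains $1$: writing $\Lambda^q\bbC^n\otimes V = S_1 \oplus S_1^c$ and $\Lambda^{q+1}\bbC^n\otimes S^k\bbC^n = T_1 \oplus T_1^c$, where $S_1$ and $T_1$ collect the summands whose subset contains $1$. In this block form $\delta$ has the shape $\bigl(\begin{smallmatrix}A & B \\ 0 & C\end{smallmatrix}\bigr)$, where $B: S_1^c \to T_1$ sends $x_I\otimes v \mapsto \pm\, x_{\{1\}\cup I}\otimes \delta_1(v)$. Because $\delta_1$ is bijective, so is $B$, and a standard Schur-complement-type column reduction yields
\[
\rank(p^{\wedge q}_{k,k+1}) \;=\; \dim T_1 + \rank(CB^{-1}A) \;=\; \binom{n-1}{q}\binom{n+k-1}{k} + \rank(CB^{-1}A).
\]

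The remaining and hardest step is to show $\rank(CB^{-1}A) \geq (q-1)\binom{n-1}{q}$. Setting $\beta_i := \delta_1^{-1}\delta_i \in \End(V)$, a direct computation gives
\[
CB^{-1}A(x_{\{1\}\cup I'}\otimes v) \;=\; \textsum_{\{i,j\}\subseteq [2,n]\setminus I'} \pm\, x_{\{i,j\}\cup I'}\otimes (\partial_j\beta_i - \partial_i\beta_j)(v),
\]
and because $\partial_1\beta_i = \delta_i$, each commutator-like coefficient satisfies $\partial_1((\partial_j\beta_i - \partial_i\beta_j)(v)) = \partial_j\partial_i v - \partial_i\partial_j v = 0$, so it lies in $\ker(\partial_1|_{S^k\bbC^n}) = S^k\bbC^{n-1}$. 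To finish, I would use the explicit structure of $V$ arising from the splitting $\bbC^n = \langle\ell_n\rangle\oplus[n-1,1]$ developed in the proof of Theorem~\ref{thm: flattenings of ellqk hit everything}: writing $\delta_j = \delta_1 - \Delta_j$ where $\Delta_j$ is explicit and of small rank, one expands the commutators and exhibits $(q-1)\binom{n-1}{q}$ linearly independent vectors among the $CB^{-1}A$-images, using that the hypothesis $q<n/2$ leaves enough room in the complementary variables indexed by $[2,n]$. This combinatorial-linear-algebra verification is the delicate step I expect to be the main obstacle.
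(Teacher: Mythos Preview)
Your approach is genuinely different from the paper's, which works entirely via the $\frakS_n$-equivariant splitting $\bbC^n=[n]\oplus[n-1,1]$ and tracks explicit Schur modules $S_{(k-s,1^q)}[n-1,1]$ through the exterior derivative to exhibit independent pieces of the image. Your block/Schur-complement reduction is an interesting alternative framework, but as written it has two real gaps.

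First, the argument that $\delta_i=\partial_i|_V$ is bijective is not valid. You write that $\delta_i(V)=S^k\bbC^{n*}\cdot(\partial_i p)$ ``already contains $S^k\bbC^{n*}\cdot q_n^k$'' because $\partial_i p=q_n^k+2kx_i\ell_nq_n^{k-1}$. But the image of the flattening of a sum is \emph{not} in general the sum (or union) of the images of the flattenings of the summands: for $D\in S^k\bbC^{n*}$ you get $D\cdot q_n^k+2k\,D\cdot(x_i\ell_nq_n^{k-1})$, and you cannot isolate the first term as $D$ varies. What you actually need is that the $(k,k)$-flattening of $\partial_i p$ itself has full rank; this is a separate statement, not covered by Theorems~\ref{thm: flattenings of qk hit everything} or~\ref{thm: flattenings of ellqk hit everything}, and it requires its own proof. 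Without it, $B$ need not be invertible and the Schur-complement identity $\rank=\dim T_1+\rank(CB^{-1}A)$ is unavailable.

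Second, and more seriously, the entire content of the proposition beyond the trivial flattening bound is the extra $(q-1)\binom{n-1}{q}$, and you have not proved it. You correctly identify $\rank(CB^{-1}A)\ge(q-1)\binom{n-1}{q}$ as ``the remaining and hardest step'' and ``the main obstacle'', but then give only a loose outline (expand commutators $\partial_j\beta_i-\partial_i\beta_j$, observe they land in $\ker\partial_1$, and ``exhibit $(q-1)\binom{n-1}{q}$ linearly independent vectors''). No such vectors are exhibited, and no mechanism is given for why exactly $(q-1)\binom{n-1}{q}$ of them are independent or where the hypothesis $q<n/2$ enters. In the paper's proof this extra contribution comes from carefully chosen $s=k$ and $s=k-1$ pieces whose images under $\delta$ land in target components disjoint from everything counted before; your Schur-complement map $CB^{-1}A$ obscures this structure, and you have not replaced it with a working argument.
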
 
\begin{proof}
For fixed $n$, recall the unique $\frakS_n$-invariant $g := g_n \in S^2 [n-1,1]$ from the proof of Theorem \ref{thm: flattenings of ellqk hit everything}. Let $[n]$ be the span of $\ell:= \ell_n$.  For every $s$, write $p_s := (\ell^*)^{s-1} \cdot (\ell q^{s-1})$,  which from Eqn. \eqref{eqn: summation in thm ell q} is a polynomial of degree $s$ with non-zero projection onto $[n]^s$. From Eqn. \eqref{eqn: decomposition of image of ellqk}, the image of the $k$-th flattening map of $\tilde{f}_{n,k}$ is $\tim (\tilde{f}_{n,k})_{k,k+1} = \bigoplus_{s=0}^k p_{s+1}S^{k-s}[n-1,1] \subseteq S^{k+1} \bbC^n$.

We will give a lower bound for the dimension of the image of $ \Lambda^q \bbC^n \otimes \tim (\tilde{f}_{n,k})_{k,k+1}$ under the exterior
derivative map. We have $\Lambda^q \bbC^n = \Lambda^{q} [n-1,1] \oplus \left( [n] \wedge \Lambda^{q-1} [n-1,1] \right)$ as a $\frakS_n$-module.

Consider the image of $( [n]\wedge \La{q-1}[n-1,1]) \otimes (p_{s+1}S^{k-s}[n-1,1])$ under the exterior derivative. The $\frakS_n$-equivariant projection of this space onto $( [n]\wedge \La{q-1}[n-1,1]) \otimes ([n]^{s+1} \otimes S^{k-s}[n-1,1])$ commutes with the exterior derivative, which is $GL_n$-equivariant and therefore $\frakS_n$-equivariant. The image of $( [n]\wedge \La{q-1}[n-1,1]) \otimes ([n]^{s+1} \otimes S^{k-s}[n-1,1])$ under the exterior derivative is (when $s\leq k-1$ and after reordering the factors) the subspace $S_{(k-s,1^{q-1})}[n-1,1]\otimes [n]\otimes [n]^{s+1} \subseteq [n]\wedge \La q[n-1,1]\otimes [n]^{s+1}S^{k-s-1}[n-1,1]$.

Now consider the image of $\La{q }[n-1,1] \otimes p_{s+1}S^{k-s}[n-1,1]$. Again, consider its projection onto $\Lambda^q [n-1,1] \otimes ([n]^{s+1} \otimes S^{k-s}[n-1,1])$. By applying the exterior derivative map, we obtain a subspace of $\left(([n] \wedge \Lambda^q [n-1,1]) \otimes ([n]^{s} \otimes S^{k-s} [n-1,1])\right)\oplus  \left(\La{q+1}[n-1,1]\otimes ([n]^{s+1} \otimes S^{k-s-1}[n-1,1] )\right)$; consider its projection to the second summand $  \La {q+1}[n-1,1]\otimes [n]^{s+1}S^{k-s-1}[n-1,1]$ when $s\leq k-1$. For the same reason as above,  the image of this projection is $S_{(k-s,1^{q })}[n-1,1]\otimes [n]^{s+1}$ up to reordering the factors.

Note that $S_{(k-t+1,1^{q-1})}[n-1,1]\oplus S_{(k-t,1^q)}[n-1,1]=S^{k-t}[n-1,1]\otimes \La{q}[n-1,1]$ as a $GL([n-1,1])$-module. Consider the summands for $s$ ranging from $0$ to $k-2 $ in the first case and $1$ to $k -1$ in the second, we obtain components $S^{k-t}[n-1,1]\otimes \La{q}[n-1,1]$ for $t$ from $1$ to $k-1$. We obtain a subspace in the image of the Koszul flattening that is isomorphic as a $GL([n-1,1])$-module to 
\[
\Bigl( \ \bigoplus_{t=1}^{k-1}S^{k-t}[n-1,1] \Bigr)\otimes \La{q}[n-1,1].
\]
The first factor of the space above has the same dimension as $S^{k-1}\BC^n$ minus $\dim (S^0[n-1,1])=1$.

So far we have a contribution to the rank of
\[
\binom{n-1}{q} \Bigl( \binom{n+k-2}{k-1} - 1 \Bigr).
\]

Next consider the $s=0$ contribution that one obtains by applying the exterior derivative to the component with the factor $\Lambda^q [n-1,1]$. The exterior derivative map is $\La q [n-1,1]  \otimes ([n] \otimes S^k[n-1,1]) \to \La q [n-1,1]\wedge [n]  \otimes S^k[n-1,1] \oplus  \Lambda^q [n-1,1] \otimes ([n]\otimes  S^{k-1}[n-1,1])$. This projects isomorphically onto the first term in the target and $(\La q [n-1,1]\wedge [n]) \otimes S^k[n-1,1]$ does not intersect the image of any other term that we considered so far, so we obtain an additional contribution to the image of dimension $\binom{n-1+k-1}{k}\binom{n-1}q$. We now have a contribution of
\[
\binom{n-1}q \Bigl(\binom{n+k-2}{k-1} - 1+ \binom{n+k-2}{k} \Bigr)=\binom{n-1}q \Bigl(\binom{n+k-1}{k } - 1\Bigr)  
\]
to the rank. 

 Finally consider  the $s=k$ and $s=k-1$ terms in the term with a factor $[n] \wedge \Lambda^{q-1}[n-1,1]$: the sources are respectively $[n]\wedge \La {q-1} [n-1,1] \otimes \langle p_{k+1} \rangle$ and $([n] \wedge \Lambda^{q-1} [n-1,1]) \otimes p_{k }[n-1,1]$. Consider the projections respectively to $([n]\wedge \La {q-1} [n-1,1] )\otimes \ell^{k-1}S^2[n-1,1]$ (the second factor is $\ell^{k-1}g$) and $( [n]\wedge \La {q-1} [n-1,1]) \otimes \ell^{k-2}S^3[n-1,1]$ (the second factor is $\ell^{k-2}g[n-1,1]$). Now, applying the exterior derivative, these spaces map injectively to $([n] \wedge \Lambda^{q}[n-1,1]) \otimes [n]^{k-1} [n-1,1]$ and $([n] \wedge \Lambda^{q}[n-1,1]) \otimes \ell^{k-2}S^2 [n-1,1]$ respectively. These targets do not appear in other terms analyzed above, so we pick up $\binom{n-1}{q-1}=\frac qn \binom{n-1}{q}$ and $(n-1)\binom{n-1}{q-1}=\frac{q(n-1)}n\binom{n-1}q$ additional contributions to the rank.

Collecting all the contributions together, we conclude.
\end{proof}

 \begin{remark}
A more careful analysis of the $q=1$ case shows it also improves the flattening lower bound.
\end{remark}

\begin{remark}\label{koscomputer}
Computer experiments performed in Macaulay2 \cite{M2}, using a variant of the code of \cite{OedOtt:EgnvectorsTensorsAlgorithmsWaringDecomp}, indicate that the situation may be significantly better. 

For $ n = 3 \vvirg 6$, $k = 1 \vvirg 6$, let $p \in S^{2k+1}\bbC^n$ be generic. The Koszul flattening $p_{k,k+1}^{\wedge 1}$ has rank equal to the bound in \eqref{eqn: a priori estimate for Koszul} except if $n=3$ and $k$ is even (in accordance with Remark \ref{rmk: skewsymmetric koszul}).

For  $n = 3 \vvirg 6$ and $k= 1 \vvirg 6$, let $p = h_{n,2k+1}$ or $p = \tilde f_{n,k}$. The Koszul flattenings of $p_{k,k+1}^{\wedge 1}$ has rank equal to the bound in \eqref{eqn: a priori estimate for Koszul} if $k$ is odd and one less than the bound in \eqref{eqn: a priori estimate for Koszul} if $k$ is even.
\end{remark}

Notice that in the cases $n=4,5,6$ with $k$ even, the Koszul flattenings give a border rank lower bound for $h_{n,2k+1}$ and $\tilde f_{n,k}$ that is one less than the bound for a generic polynomial.

\begin{question} What are the ranks of the Koszul flattenings for $\tilde f_{n,k}$ and $h_{n,2k+1}$ in general?
\end{question}

\bibliographystyle{amsalpha}
\bibliography{flatbib}

\end{document}